\theoremstyle{plain}
\newcommand{\C}{\mathbb C}
\newtheorem{thm}{Theorem}[section]
\newtheorem{cor}[thm]{Corollary}
\newtheorem{defn}[thm]{Definition}
\newtheorem{examp}[thm]{Example}
\renewcommand{\qed}{\hfill $\fbox{}$}
\makeatletter\@addtoreset{equation}{section} \makeatother
\DeclareMathOperator{\per}{Per}
\DeclareMathOperator{\spa}{span}
\newcommand{\lb}{\lambda}
\newcommand{\x}{{\bf x}}
\newcommand{\y}{{\bf y}}
\newcommand{\e}{{\bf e}}
\newcommand{\be}{\begin{equation}}
\newcommand{\ee}{\end{equation}}
\newcommand{\HH}{\mathcal H}
\begin{document}

\title{Relative Reality}
\author{Rongwei Yang\\University at Albany, SUNY}
\thanks{$^*$ Corresponding author}
\address[Rongwei Yang]{Department of Mathematics and Statistics, University at Albany, the State University of New York, Albany, NY 12222, U.S.A.}
\email{ryang@albany.edu}


\maketitle

\begin{abstract}
The ``Hard Problem" of consciousness refers to a long-standing enigma about how qualia emerge from physical processes in the brain. Building on insights from the development of non-Euclidean geometry, this paper seeks to present a structured and logically coherent theory of qualia to address this problem. The proposed theory starts with a definition on what it means for an entity to be non-physical. A postulate about awareness is posed and utilized to rigorously prove that qualia are non-physical and thoughts are qualia. Then the paper introduces a key concept: relative reality, meaning that perceptions of reality are relative to the observer and time. The concept is analyzed through a mathematical model grounded in Hilbert space theory. The model also sheds new light on cognitive science and physics. In particular, the Schr\"{o}dinger equation can be derived easily through this model. Moreover, this model shows that eigenstates also exist for classical energy-conserving systems. Analyses on the G. P. Thomson experiment and the classical harmonic oscillator are made to illustrate this finding. The insight gained sheds new light on the Bohr-Einstein debate concerning the interpretation of quantum mechanics. At last, the paper proposes a postulate about qualia force and demonstrates that it constitutes a fundamental part of absolute reality, much like the four fundamental forces in nature. 
\end{abstract}

\section{Introduction}
Life is the greatest wonder of nature. The ability to perceive and interact with the environment is the defining characteristic of life. Sensation arises from the function of our eyes, ears, nose, tongue, body, and mind. The first five sensory faculties allow us to perceive the external world through sight, sound, smell, taste, and bodily feeling (touch, pain, coldness, etc). The mind observes, analyzes, and integrates the sensory information collected, constructing a cohesive perception of reality. 
Due to this unique capability, the mind is often not regarded as a sensory faculty in Western philosophy. While the physical mechanisms by which the brain generates sensations from neural signals have been extensively studied, long standing mysteries persist about the phenomenological aspects of this process. A critical one, famously articulated by David Chalmers as the ``Hard Problem" of consciousness, concerns how qualia emerge from physical processes in the brain.
Building on insights from the development of non-Euclidean geometry, this article adopts an axiomatic approach to explore this profound problem. Along the way, a mathematical framework is develop, shedding new light on cognitive science and physics.
The paper is structured as follows. 

\tableofcontents

\section{What is Quale?}

The notion of qualia (sg. quale), or ``raw feels", has been intensively scrutinized and debated in neuroscience, cognitive science, and philosophy. Some well-known thought experiments were devised in attempts to grasp its meaning, such as ``What Is It Like to Be a Bat?" \cite[Nagel]{Na74}, ``China Brain" \cite[Block]{Bl80}, ``Mary's Room" \cite[Jackson]{Ja82}, and ``Inverted Spectrum" \cite[Shoemaker]{Sh82}, ``Dancing Qualia" \cite[Chalmers]{Ch96}, etc. As Daniel Dennett famously put: ``qualia" is ``an unfamiliar term for something that could not be more familiar to each of us: the ways things seem to us" \cite{De85}. He identifies four properties of qualia:

{\bf 1}. {\em Ineffable}: they cannot be communicated, or apprehended by any means other than direct experience.

{\bf 2}. {\em Intrinsic}: they are non-relational properties, which do not change depending on the experience's relation to other things.

{\bf 3}. {\em Private}: all interpersonal comparisons of qualia are systematically impossible.

{\bf 4}. {\em Directly or immediately apprehensible by consciousness}: to experience a quale is to know one experiences a quale, and to know all there is to know about that quale. 

\vspace{2mm}

However, the exact meaning of ``qualia" remains elusive in many discussions. This section aims to offer some clarification. 

 \subsection{Non-Physical Entities} Sensation is the raw, unprocessed experience of physical stimuli before it is perceived, interpreted, or organized by the mind. The physical processes leading to sensation have been well-studied, see for example \cite{Go99,Le00,Pa99}. Qualia are the subjective, first-person experience or the ``what it is like" aspect of a sensation.
 It captures the internal, phenomenological quality of experience, and it is therefore inherently subjective. Sensation and quale are obviously closely related, but they are not the same. Sensation is the physical experience of qualia. Qualia are the phenomenological aspect of sensation. For example, when one eats a candy and experiences sweetness in the mouth. The experience part is sensation, while the sweetness is a quale. When we feel headache, the feeling part is sensation, while the pain is a quale. We can measure the intensity of a sensation through the strength of the associated neural signals, or through the body reaction to the sensation. But qualia have no physical entity and cannot be measured directly. Frank Jackson \cite{Ja82} defined qualia as ``...certain features of the bodily sensations especially, but also of certain perceptual experiences, which no amount of purely physical information includes".  In order to better analyze the non-physical nature of qualia, we make the following definition.
   
\begin{defn}\label{Ph} An existence or entity is said to be {\em physical} if it satisfies at least one of the following criteria:
 
 1) A neighborhood of its spatial location can be determined.
 
 2) Its mass, shape, size, or electric charge can be measured.
 
 3) It is one of the four fundamental forces: gravity, electro-magnetic force, strong or weak force, or it interacts with at least one of them. 
 \end{defn}
 
\noindent Consequently, an existence or entity is said to be {\em non-physical} if it bears none of the characteristics described in Definition \ref{Ph}. In the sequel, we shall prove that qualia are non-physical. We take the formation of sound as an example for the analysis. Analyses on sight, smell, taste, bodily feeling, and thought can be done in a similar fashion.
 
\subsection{An Analysis of Sound} 
The creation of sound from airwave stimuli is a process that involves the ear and the brain working together to transform physical vibrations in the air into the experience of hearing. According to the current understanding, its process can be described as follows.

{\bf 1}. {\em Airwaves Entering the Outer Ear}. The pinna (the visible part of the ear) collects and funnels these sound waves into the ear canal, directing them to the eardrum (tympanic membrane).

{\bf 2}. {\em Airwave Causing the Vibration of Eardrum}. These vibrations are transmitted to three tiny bones in the middle ear (the ossicles: malleus, incus, and stapes). The ossicles amplify the vibrations and pass them to the oval window, a membrane leading to the inner ear.

{\bf 3}. {\em Hydraulic Transduction}. Vibrations from the oval window create waves in the fluid inside the cochlea, a spiral-shaped organ in the inner ear. The basilar membrane inside the cochlea vibrates in response to these fluid waves, with different regions responding to different frequencies of sound (high frequencies near the base, low frequencies near the apex).

{\bf 4}. {\em Neural Transduction}. The organ of Corti, located on the basilar membrane, contains hair cells (sensory receptors). The fluid waves cause the hair cells to move, and tiny hair-like structures (stereocilia) on the hair cells bend. This bending opens ion channels, leading to the generation of electrical signals in the hair cells. These signals are passed to the auditory nerve (cochlear nerve).

{\bf 5}. {\em Transmission to the Brain}. The auditory nerve sends the electrical signals to the brainstem, where basic features like timing and localization of sound are processed. The signals are then relayed to the thalamus (a sensory relay station).

{\bf 6}. {\em Processing in the Auditory Cortex}. The signals reach the primary auditory cortex in the temporal lobe of the brain. Here, the brain analyzes features such as pitch, loudness, rhythm, and timbre.

{\bf 7}. {\em Integration and Perception}. The brain integrates the processed signals with input from both ears to create a spatial understanding of sound (e.g., identifying the direction of a sound source). Higher brain areas interpret and give meaning to the sounds, enabling recognition of speech, music, or environmental noises.

{\bf 8}. {\em Creation of the Sensation of Sound}. The mind combines the neural signals and creates the subjective experience of hearing a sound. Context, memory, and attention influence how the sound is perceived and understood (e.g., recognizing a familiar voice or enjoying a melody).

The processes {\bf 1} up to {\bf 7} are physical, while the quale of sound emerges in process {\bf 8}. Sound is the outcome of an interaction between the brain and the air vibration. It is non-physical according to Definition \ref{Ph}. This understanding addresses a folklore philosophical inquisition: when a tree falls in an un-inhabited land, is there a sound? The answer is ``no" because sound is the auditory experience on the part of the listener as his/her neural system interacts with the airwave stimuli. Therefore, if there is no listener, then there is no sound. The preceding analysis paves the ground for the first theorem.

\begin{thm}\label{nonphy}
Qualia are non-physical. 
\end{thm}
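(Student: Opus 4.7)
The plan is to establish the theorem by a direct appeal to Definition \ref{Ph}: I must show that a quale satisfies none of the three defining criteria of ``physical''. The sound analysis already carried out in Section 2.2 furnishes a template, because it explicitly separates the physical chain (steps \textbf{1}--\textbf{7}) from the emergence of the quale (step \textbf{8}). My strategy is to argue modality by modality, handling sound in detail and then asserting that the same stepwise decomposition, with the quale appearing only at the final integrative stage, applies \emph{mutatis mutandis} to sight, smell, taste, bodily feeling, and thought.

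For criterion 1, I would observe that every locatable entity involved in hearing --- the eardrum, ossicles, cochlear fluid, hair cells, auditory nerve, thalamus, primary auditory cortex, the electrical pulses they carry --- belongs to the physical chain \textbf{1}--\textbf{7}, and that these are precisely the neural substrate, not the heard sound itself. If one attempts to localize the experienced tone or timbre in the cortex, one finds only neurons firing; the ``what it is like'' of the sound has no neighborhood assigned to it. For criterion 2, the same separation applies to measurables: signal voltage, firing rate, vibration amplitude, and cortical activation can all be recorded, but each such measurement pertains to a participant of steps \textbf{1}--\textbf{7}, never to the quale produced in step \textbf{8}. This is essentially Jackson's observation that ``no amount of purely physical information'' captures the quale, which I would cite rather than re-derive.

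The hard part will be criterion 3: qualia are certainly not themselves fundamental forces, but one must argue that they do not \emph{interact} with any. The potential loophole is a claim of downward causation --- that the experienced sound influences subsequent neural dynamics via some electromagnetic coupling. I would close this loophole by pointing out that on the account given in Section 2.2, the physical evolution of the brain across steps \textbf{1}--\textbf{7} is already complete in physical terms, and step \textbf{8} is labelled as the emergence of a subjective quality \emph{from} that evolution, not as a new physical agent acting back upon it; hence the quale contributes no force-mediated interaction distinct from what is already carried by neural activity itself. With all three criteria failing for the auditory quale, and with identical reasoning applicable to the other sensory modalities and to thought (whose analogous chain terminates in an integrative step producing a subjective experience), Definition \ref{Ph} is negated in every case, so qualia are non-physical. \qed
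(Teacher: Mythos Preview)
Your argument is broadly sound at the level of rigor the paper operates at, but it takes a genuinely different route from the paper's own proof. You argue \emph{first-personally}: you take the auditory quale as experienced by the subject, run through criteria 1--3 of Definition~\ref{Ph} one at a time, and argue that every locatable, measurable, or force-interacting entity belongs to the physical chain \textbf{1}--\textbf{7} rather than to the quale emerging at step~\textbf{8}. The paper instead argues \emph{third-personally}: it considers the sound heard by a cat and observes that if that sound were physical in the sense of Definition~\ref{Ph}, a human experimenter should be able to measure its location, size, shape, charge, or force-interactions; since no such measurement of another creature's heard sound is possible, the quale is non-physical. The generalization to the remaining modalities is then asserted in the same way you do.

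The paper's device buys something your approach has to work for. By shifting to a third party's quale, the inaccessibility to measurement becomes immediate --- it rides on the privacy property of qualia already listed in Section~2 --- and all three criteria are dispatched in one stroke without having to litigate the downward-causation loophole you correctly flag under criterion~3. Your approach, by contrast, is more systematic and makes explicit contact with the step-by-step analysis of Section~2.2, but your closure of criterion~3 rests on reading step~\textbf{8} as non-interacting, which is a substantive (epiphenomenalist-flavored) commitment rather than something the preceding text strictly establishes. Either argument is acceptable here; the paper's is shorter and leans on privacy, yours is longer and leans on the substrate/quale separation.
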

\begin{proof} We prove the theorem for sound. It is evident that a normal cat (or any other animal) can hear sound. If sound were physical, as described in Definition \ref{Ph}, then human should be able to measure the sound heard by the cat, namely, measure its location, size, shape, charge, or its interactions with the four fundamental forces. Since none of these measurements can be made, the sound heard by a cat is non-physical. The proof can be generalized to work for qualia of sight, smell, taste, bodily feeling, and thought because they bear the same characteristics as that of sound. In summary, although the formation of each quale has its distinctive physical process, the qualia themselves are all non-physical.\qed
 \end{proof}
Indeed, Theorem \ref{nonphy} is well-understood and commonly accepted. The effort here is to place it on a logical foundation. Since non-physical entities do not have parts, the following property is immediate.
 \begin{cor}
 A quale is inseparable.
 \end{cor}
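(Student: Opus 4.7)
The plan is to derive the corollary directly from Theorem \ref{nonphy} by making explicit the short intermediate claim that the surrounding text hints at: non-physical entities have no parts. Once that claim is in hand, the corollary is one line.

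First I would formulate the intermediate claim precisely. Call an entity $E$ \emph{separable} if it admits a decomposition into two or more constituents $E_1, E_2$ that can be distinguished from one another as genuine sub-entities of $E$. I would then argue by contrapositive: if $E$ is separable, the very act of distinguishing $E_1$ from $E_2$ requires some distinguishing attribute, and any such attribute falls under one of the three clauses of Definition \ref{Ph}. A spatial attribute supplies a neighborhood of location (clause 1); a geometric or inertial attribute supplies shape, size, mass, or charge (clause 2); a dynamical attribute manifests as interaction with one of the four fundamental forces (clause 3). In each case $E$ acquires a physical characteristic through its parts and is therefore physical. Taking the contrapositive, any non-physical entity is inseparable.

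The corollary then follows immediately: by Theorem \ref{nonphy} every quale is non-physical, hence inseparable.

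The step I expect to be the main obstacle is the exhaustiveness of the case analysis in the intermediate claim. One must argue that \emph{every} possible criterion for distinguishing constituents is ultimately captured by one of the three clauses of Definition \ref{Ph}, ruling out purely phenomenological or temporal criteria. The natural response is that two qualitatively identical experiences occurring at different times are not parts of a single quale but distinct qualia; a bifurcation of this kind therefore lies outside the scope of ``separability'' as used here. Making this reading explicit is essentially a matter of stipulation given the definitions already adopted, so the obstacle is conceptual rather than technical.
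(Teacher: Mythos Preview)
Your proposal is correct and follows essentially the same route as the paper: the paper simply asserts ``since non-physical entities do not have parts, the following property is immediate'' and invokes Theorem~\ref{nonphy}, which is exactly your logical skeleton. You supply considerably more justification for the intermediate claim (the contrapositive through Definition~\ref{Ph}) than the paper does---the paper treats ``non-physical $\Rightarrow$ no parts'' as self-evident and offers no argument for it---but the structure is the same.
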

\noindent For example, when our eyes catch the sight of a house, the quale of this sight is inseparable: even thought it reflects many parts of the house-doors, windows, siding, and roof, etc., the sight is not a union of qualia for each piece.  
Likewise, when we hear a orchestral piece of music, although the sound at any moment is created by an ensemble of violins, cellos, or trumpets, etc., the sound is inseparable. Furthermore, when we see a towering mountain, the sight is not big, when we see a small ant, the sight is not small; when we see a red apple, the sight is not red, when we see a yellow lemon, the sight is not yellow. In short, qualia are devoid of any physical attributes, much like a reflection in a mirror.

\subsection{A Mathematical Analogy} 
Hilbert space, which is a complete vector space equipped with an inner product, plays a fundamental role in many areas of mathematics and physics. In particular, it provides a mathematical framework for quantum mechanics. This subsection aims to draw an analogy between sensation and inner product. For the reader's convenience, we provide some basic definitions.

Let $\HH$ be a complex vector space, finite or infinite dimensional. An {\em inner product} $\langle {\bf \cdot}, {\bf \cdot}\rangle$ is a scalar-valued function defined on the Cartesian product $\HH\times \HH$ that satisfies the following properties. As a convention, we let $0$ stand for the scalar zero as well as the zero vector.
\vspace{2mm}

a) $\langle a{\bf x}+{\bf y}, {\bf z}\rangle=a\langle {\bf x}, {\bf z}\rangle+ \langle {\bf y}, {\bf z}\rangle,\ \ \x, \y, {\bf z}\in \HH, a\in \C$,

b) $\langle {\bf x}, {\bf y}\rangle=\overline{\langle {\bf y}, {\bf x}\rangle}$,

c) $\|{\bf x}\|^2:=\langle {\bf x}, {\bf x}\rangle\geq 0$ with the equality holds if and only if ${\bf x}=0$.
\vspace{2mm}

\noindent Here, $\C$ stands for the set of complex numbers, and $\overline{a}$ stands for the complex conjugate of $a$. 
The value $\|{\bf x}\|$ is the {\em norm} or the length of ${\bf x}$, and $\|{\bf x}-{\bf y}\|$ is the distance between ${\bf x}$ and ${\bf y}$. A notable fact is that the inner product of two vectors is a scalar, not a vector. {\em Cauchy-Schwartz} inequality states that
\[|\langle {\bf x}, {\bf y}\rangle|\leq \|{\bf x}\|\|{\bf y}\|,\ \ \ {\bf x}, {\bf y}\in \HH.\]Two vectors  ${\bf x}$ and ${\bf y}$ are said to be {\em orthogonal} if $\langle {\bf x}, {\bf y}\rangle=0$. The space $\HH$ is said to be {\em complete} if every convergent sequence of vectors in $\HH$ has a limit in $\HH$. In the quantum mechanics terminology, a vector ${\bf x}$ in $\HH$ is called a {\em state}. In this paper, if $\x$ represents a physical state, then $\|{\bf x}\|^2$ is the {\em energy} of $\x$. For more details about Hilbert space, we refer the reader to \cite{Do98,Ha57}. The following is an important example.

\begin{examp}
An {\em inner product} in the vector space ${\C}^n=\{{\bf x}=(x_1, ..., x_n)\mid x_j\in {\C},\ 1\leq j\leq n\}$ is defined by\[\langle {\bf x}, {\bf y}\rangle=x_1\overline{y_1}+\cdots +x_n\overline{y_n}.\] A special case is when all the components of ${\bf x}$ and ${\bf y}$ are real numbers, in which it can be shown that 
\begin{equation}\label{CS}
\langle {\bf x}, {\bf y}\rangle=\|{\bf x}\|\|{\bf y}\|\cos\theta,
\end{equation}
where $\theta$ is the angle between ${\bf x}$ and ${\bf y}$.
\end{examp}

Sensation, as an outcome of the interplay between a neural system and stimuli, is analogous to the inner product in the following sense. 

{\bf 1}. {\em Stimuli as Vector ${\bf x}$}. External stimuli, such as light waves, sound waves, chemical signals, or a combination of them, can be thought of as one vector. This vector represents the external input.

{\bf 2}. {\em Neural System as Vector ${\bf y}$}. The sensory faculties, including individual's prior experiences, knowledge, and expectations, forms the first vector. This vector represents the mental state of an observer which varies with time, individuals, and across different species. 

{\bf 3}. {\em Sensation Represented by Inner Products}. A sensation corresponds to the outcome of the interaction of these two vectors-much like the inner product, which measures the extent to which ${\bf x}$ aligns or interacts with ${\bf y}$. The inner product results in a scalar value that encapsulates an aspect of the interaction, akin to how a sensation encapsulates the subjective experience of a sensory input.

If we regard the Hilbert space ${\HH}$ as a vector space of physical states, then the fact that $\langle {\bf x}, {\bf y}\rangle$ is a scalar, which does not belong to $\HH$, reflects the non-physical aspect of sensation (Theorem \ref{nonphy}). Furthermore, if ${\bf x}$ and ${\bf y}$ are orthogonal, then the stimulus ${\bf x}$ does not generate a sensation on the part of $\y$. Note that every complex number $z\in \C$ has a polar decomposition $z=|z|e^{i\alpha}$, where $\alpha\in [0, 2\pi)$. We summarize the observation above and make the following working definition for this article.

\begin{defn} Given any stimulus ${\bf x}$ and sensory faculty ${\bf y}$, the associated {\em sensation} on the part of $\y$ is defined by $\langle {\bf x}, {\bf y}\rangle=|\langle \x, \y\rangle |e^{i\alpha}$, where $|\langle \x, \y\rangle |$ represents the intensity and $e^{i\alpha}$ represents the quale of the sensation.
\end{defn}

\subsection{Applications to Psychophysics} For the reader who is not familiar with this field, psychophysics is the scientific study of the relationship between physical stimuli and the sensations and perceptions they produce in the human mind. It seeks to understand how physical properties of the environment, such as light and sound wave, are detected, processed, and interpreted by sensory systems. M\"{u}ller's law of specific nerve energies \cite{Mu}, Weber-Fechner law \cite{Fe60}, and Stevens' power law \cite{St57} are fundamental in this area of science. 

To proceed with the discussion on the mathematical model, we recall that the Cauchy-Schwartz inequality (\ref{CS}) relates the inner product $\langle {\bf x}, {\bf y}\rangle$ to the energies of the physical states ${\bf x}$ and ${\bf y}$, offering a systematic quantitative relationship between physical stimuli and the sensations they produce. In equation (\ref{CS}) with both vectors ${\bf x}$ and ${\bf y}$ being real, if we assume $\|{\bf y}\|=1$, then $\langle {\bf x}, {\bf y}\rangle=\|{\bf x}\|\cos\theta$. Therefore, a measurement of sensation $\langle {\bf x}, {\bf y}\rangle$ is a function of $\|{\bf x}\|$. Furthermore, since each sensory faculty has limited range of capability (e.g. human can only see a narrow frequency spectrum of light, and a light too bright can impair the eyes), the domain and range of the function must both be bounded with bounds depending on the physical property of ${\bf x}$. 
\begin{cor}\label{senf}
A measurement of sensation $\langle {\bf x}, {\bf y}\rangle$ is a bounded function of $\|{\bf x}\|$ defined on a bounded interval $a\leq \|\x\|\leq b$.
\end{cor}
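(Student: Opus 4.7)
The plan is to read the corollary as a direct packaging of the discussion immediately preceding it, with two essentially independent pieces: first, that sensation is a \emph{function} of $\|\x\|$, and second, that both its domain and its range are \emph{bounded}. I will treat these in order.

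For the functional claim, I would start from the Cauchy-Schwartz identity (\ref{CS}), $\langle \x, \y\rangle = \|\x\|\,\|\y\|\cos\theta$, applied under the normalization $\|\y\|=1$ that the paragraph above already invokes. With the sensory state $\y$ fixed and the relative orientation $\theta$ fixed as part of the measurement setup, this immediately reduces to $\langle \x, \y\rangle = \|\x\|\cos\theta$, a scalar-valued function of the single real variable $\|\x\|$. This realizes the ``measurement of sensation'' as a well-defined function of the stimulus intensity.

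For the boundedness of the domain, I would appeal to the physical calibration of the sensory faculty $\y$ that is already discussed just above the corollary. Each faculty possesses a detection threshold $a>0$ below which the stimulus evokes no measurable sensation, and a saturation or damage threshold $b<\infty$ above which the faculty either clips to a maximal response or ceases to function (very faint versus blinding light; inaudible versus painfully loud sound). Hence the physically meaningful values of $\|\x\|$ lie in a bounded interval $[a,b]$ whose endpoints depend on the particular faculty $\y$. For the boundedness of the range, I would note that on this interval the identity $\langle \x, \y\rangle = \|\x\|\cos\theta$ yields $|\langle \x, \y\rangle|\leq b$ since $|\cos\theta|\leq 1$, so the function is automatically bounded once the domain is.

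The main obstacle here is not mathematical but conceptual: the domain-boundedness is not a theorem of Hilbert space theory but an empirical input about biological sensors, and to keep the proof honest I would need to state it as a standing assumption on $\y$ (inherited from the discussion of sensory range) rather than pretend to derive it from the axioms in Definition \ref{Ph}. Once that assumption is made explicit, the corollary reduces to a one-line consequence of (\ref{CS}) together with $|\cos\theta|\leq 1$.
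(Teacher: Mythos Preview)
Your proposal is correct and follows essentially the same route as the paper: the corollary is stated without a separate proof, drawing directly on the preceding paragraph, which uses (\ref{CS}) with $\|\y\|=1$ to get $\langle \x,\y\rangle=\|\x\|\cos\theta$ and then invokes the physical limitations of sensory faculties to bound both domain and range. Your one small addition---deriving range boundedness from domain boundedness via $|\cos\theta|\leq 1$ rather than asserting it as a second empirical fact---is a mild sharpening, but the overall argument is the same.
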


The following examples highlight potential applications of this model to psychophysics.

\begin{examp}
The Law of Specific Nerve Energies posits that each sensory faculty, no matter how it is stimulated, produces a specific type of sensation. The functions of eyes, ears, nose, tongue, body, and mind are mutually exclusive (e.g., we cannot see a headache, hear a color, taste a sound, or smell a thought). If we regard the six sensory faculties as distinct physical states in $\HH$, then the law can be described by the orthogonality of these vectors. Interestingly, this description can shed light on why it is impossible to convey a sensation to someone who has never experienced it before: understanding is a function of the mind vector, which is orthogonal to the vectors representing the other five sensory faculties.
\end{examp}

\begin{examp}
Suppose our eye $\y$ catches a photon $\x$ with frequency $\nu$. The energy quanta $E=\|\x\|^2$ of the photon is given by the Max Planck formula $E=h\nu$, where $h$ is the Planck constant. The range of light frequencies that are visible to human is roughly $4.0 \times 10^{14} Hz$ to $7.9 \times 10^{14} Hz$. If we take $10^{14} Hz$ as one unit, then a measurement of color (our sight sensation) can be described as a bounded function defined on $[4, 7.9]$.
\end{examp}

\begin{examp}
The Weber-Fechner law \cite{Fe60} and Stevens' power law \cite{St57} describe empirical relationship in psychophysics, relating the energy $E=\|\x\|^2$ of a stimulus $\y$ to the perceived magnitude of a sensation. They are functions $\psi (E)=k\log (E/E_0)$, and resp. $\phi (E)=k E^{\alpha}$, where $k, E_0$, and $\alpha$ are positive constants dependent on the sensory faculties. Although they are not bounded functions as stipulated by Corollary \ref{senf}, they simulate the psychophysics relation well in certain bounded intervals $a\leq E\leq b$.
\end{examp}

The three examples above demonstrate that the Hilbert space model developed in Section 2.3 aligns with current theories in psychophysics. However, in reality, sensations are often too intricate to be fully captured by complex numbers alone. This model awaits further development to accurately reflect the complexity of sensory experiences.

\section{The Postulate of Awareness}

When the sweetness taste of a candy or the pain of headache is cognized by one's mind, we often say that one is aware of the sweetness or the pain. This example suggests that qualia are the subjects of awareness, and awareness is the act of catching and experiencing qualia. In view of this close connection, a study of qualia is incomplete without an analysis on awareness. For the benefit of subsequent study, it is necessary that we make a clear working definition of awareness. The concept of awareness has been defined in various ways across disciplines such as philosophy, psychology, neuroscience, and artificial intelligence. Here, we list a few recent ones as a reference.

{\bf 1}. Block \cite{Bl95}: Awareness is the ability to directly perceive and consciously experience the world or oneself. It encompasses both external stimuli and internal mental states.

{\bf 2}. Posner and Rothbart \cite{PR98}: Awareness involves the selective focus on certain aspects of the environment or mental states, often described as attention and monitoring.

{\bf 3}. Baars \cite{Ba88}: Awareness is the state in which information becomes globally available for processing across the brain.

{\bf 4}. Shiffrin and Schneider \cite{SS77}: Awareness is the conscious acknowledgment of stimuli or mental processes, distinguishing between implicit (unconscious) and explicit (conscious) cognition.

{\bf 5}. Tononi \cite{To04}: Awareness arises when information is integrated across multiple systems to form a coherent perception.

Although qualia are not explicitly mentioned, being basic elements of the phenomenal aspects of consciousness, they are implicitly used in all the definitions above. This observation leads us to a more foundational definition of awareness in terms of qualia.

\begin{defn}\label{aware}
Awareness is the mind's activity to experience, register, and prioritize qualia.
\end{defn}

Building on the definition, this section aims to propose a postulate about awareness and demonstrate that thoughts are a form of qualia. These findings are instrumental in developing a mathematical framework for cognition. 

\subsection{Awareness and Qualia}

The following analysis aims to further illuminate the relationship between awareness and qualia. 

{\bf 1}. {\em Qualia exist without being aware of them}. For instance, a faint sound in the background may produce auditory qualia (e.g., pitch, tone) even if one is not consciously aware of hearing it. Also, while engrossed in reading, one may not notice the feeling of one's back against the chair, but the tactile qualia of pressure and texture still exist in the background. This indicates that neural processes can register sensory inputs that do not reach the threshold for conscious awareness but still produce qualia at a subconscious level. Hence, depending on one's mental state, awareness tends to prioritize certain experiences while ``dimming" others, though the unattended qualia remain latent.

{\bf 2}. {\em Awareness is independent of specific qualia}. Awareness operates as a selective process that can highlight different qualia in sequence, emphasizing that it is distinct from the content (qualia) it attends to. For example, if a bee stings on a toe of the person engrossed in reading, then his/her awareness moves swiftly from the sight of words to the pain caused by the sting.
During mindfulness practices, awareness can move from focusing on breath to sensations in the hands, feet, or other parts of the body. Furthermore, in deep states of meditation, awareness may persist even when qualia are minimized or absent. Studies show that thoughts are like streams that flow constantly in the subconsciousness mind, while awareness agitates and makes ``most-concerned" thoughts on-stage, for instance, see \cite{Ba97, Fr00, Ka11, Ga18, Th07}. Thus, awareness can be seen as a dynamic higher-order process capable of ``shining a light on" any part of the body or mind. 
At any moment, our body is experiencing numerous sensations, from head to toes and from the skin to the organs. However, our awareness can catch only a few of them. Sensations are abundant even during deep sleep or coma, although hardly any of them is registered by the mind. 

Definition \ref{aware} reveals a fundamental phenomenon. We pose it as a postulate.

\vspace{2mm}
 
\noindent {\bf The Postulate of Awareness}: The subjects of awareness are exclusively qualia.

\vspace{2mm}

\subsection{Some Hidden Aspects of Awareness}

The following discussions explore some additional aspects of awareness that are seldom noted in philosophical discourses yet remain consistent with the postulate presented above.

{\bf 1}. {\em Biological Basis and Energy Costs}. Awareness, being a continuous form of brain activity, requires a substantial amount of bio-energy. This demand is a primary reason that human and animals need sleep. The human brain accounts for about 20\% of the body's total energy usage, even though it constitutes only about 2\% of body weight. Awareness arises from the coordination of distributed neural activity across the brain. The Global Workspace Theory (GWT) \cite{Ba88} suggests that awareness involves the widespread sharing of information among different brain areas, especially through the prefrontal cortex and parietal lobes.
Sensory inputs are processed in primary sensory cortices, while the integration and conscious awareness of these inputs occur in higher-order regions like the parietal cortex \cite{La06}.

Sleep is often viewed as an adaptation to reduce bio-energy expenditure and to allow for essential maintenance processes.
Slow-wave sleep (SWS), in particular, is characterized by reduced neural activity and metabolic rates, conserving energy.
During sleep, the brain clears metabolic waste products (e.g., beta-amyloid), which accumulate during wakefulness. Energy is redirected from active awareness to processes like strengthening synaptic connections and pruning unnecessary ones. While sleep leaves animals vulnerable, the energy conservation and restorative benefits appear to outweigh these risks, as evolution has favored its persistence across virtually all species. 

 {\bf 2}. {\em Time Perception}. Awareness enables us to integrate sensory inputs, memories, and cognitive processes into a coherent experience. Since qualia are processed one after another by the mind, this sequential activity contributes to the perception of time. For example, the awareness of a song involves integrating successive notes into a melody, giving rise to a sense of temporal flow. Without awareness, individual events might be perceived as disjointed. Awareness not only integrates sensory data but also connects past, present, and anticipated future events into a timeline. This process leads to the subjective experience of time as something that ``flows" or ``passes."

From a biological aspect, the perception of time involves specific brain regions that overlap with those enabling awareness \cite{BM11,Ea08,GIM18}. Neural activity, such as theta waves and gamma oscillations, is thought to act as a biological clock that supports the experience of temporal intervals. Awareness depends on these oscillations to integrate information over time, making them central to both time perception and consciousness. Brain areas like the prefrontal cortex and posterior cingulate cortex are active during self-referential thought and temporal awareness. These functions align with the processes required for sustained awareness. The following simple examples illustrate the correlation between awareness and the perception of time:

a) During deep sleep or meditation, when awareness diminishes, time appears to pass quickly. 

b) When awareness is intensely engaged, e.g., during a car accident or moments of danger, time may seem to slow down because the brain processes more details in a shorter interval.

c) Substances that alter awareness, such as psilocybin, can distort the perception of time \cite{SB18,SB22}. This indicates that changes in the structure of awareness directly affect how time is experienced.
 
 {\bf 3}. {\em Awareness Is Local}. Since qualia are inherently tied to specific sensory or cognitive inputs, awareness is generally localized because it is constrained by the spatial and temporal properties of these qualia. Neural mechanisms like the thalamus and prefrontal cortex help filter incoming sensory information, determining which inputs reach conscious awareness. This selective process ensures that awareness focuses on the most relevant stimuli (e.g., a bee sting in your toe) while ignoring less critical inputs. Moreover, cognitive science suggests that awareness has a limited capacity, processing about 4-7 chunks of information at a time \cite{Ba86,Br58,Mi56}. This constraint also explains why awareness cannot encompass the vast array of bodily qualia simultaneously. 

The localization of awareness has practical implications. Techniques like mindfulness-based stress reduction use the localization of awareness to manage pain by shifting attention away from painful areas to less intense sensations. The localized feature of awareness also help doctors to understand and treat disorders like neglect syndrome, in which case individuals ignore one side of their body or environment.

\subsection{Thoughts Are Qualia}

While thoughts are widely accepted as being non-physical, the question whether they are qualia is still in debate. Thomas Nagel \cite{Na74} argues that subjective experiences (including thoughts) are intrinsic to consciousness and cannot be reduced to objective descriptions. David Chalmers \cite{Ch96} discusses how all conscious states, including propositional attitudes like thoughts, have phenomenal aspects. On the other hand, unlike sensory qualia, thoughts are often abstract and propositional, dealing with ideas, logic, and relationships rather than raw, first-person experiences, for example the thought ``97 is a prime". The propositional nature of thoughts seems to suggest they are distinct from the purely phenomenal character of qualia. Jerry Fodor \cite{Fo75} argues that thoughts function more like syntactic structures than qualitative experiences. Furthermore,
thoughts often have intentionality, or ``aboutness" (e.g., thinking about a loved person). Qualia, in contrast, seem to lack this referential quality. Franz Brentano \cite{Br74} introduces the concept of intentionality as distinct from qualia. A hybrid model is proposed in Evan Thompson \cite{Th07}, which discusses how cognition and phenomenology interrelate. 

However, in the framework of this article, the following finding is a logical outcome.

\begin{thm}\label{thought}
Thoughts are qualia.
\end{thm}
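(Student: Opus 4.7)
The plan is to derive Theorem \ref{thought} as a direct syllogistic consequence of the Postulate of Awareness together with the observation that thoughts are genuinely objects of awareness. Since the postulate asserts that the subjects of awareness are \emph{exclusively} qualia, it suffices to establish that any thought can be, and typically is, a subject of awareness; the theorem then follows by contrapositive reasoning. This keeps the argument entirely within the axiomatic framework already set up in Sections 2 and 3, avoiding any appeal to the contested philosophical debate (Fodor, Brentano, etc.) summarized just before the theorem.

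First I would make precise what it means for a thought to be a subject of awareness, using Definition \ref{aware}: the mind experiences, registers, and prioritizes the thought. I would then marshal the phenomenological evidence already rehearsed in Section 3.2, namely that thoughts form a constant stream in the subconscious and that awareness repeatedly ``shines a light'' on selected thoughts, bringing them on-stage (as with the engrossed reader, the meditator following the breath, or the sudden intrusive recollection). This establishes the minor premise: thoughts are among the things the mind experiences, registers, and prioritizes, hence thoughts lie in the scope of awareness.

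With the Postulate of Awareness as the major premise, the conclusion is immediate. If thoughts were not qualia, then awareness would have a non-quale subject, contradicting the postulate; therefore thoughts must be qualia. As a sanity check I would note consistency with Theorem \ref{nonphy}: thoughts are widely granted to be non-physical, so classifying them as qualia does not violate the non-physicality conclusion already established for qualia in general.

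The main obstacle I anticipate is the minor premise rather than the deduction itself: one must convincingly argue that propositional or intentional thoughts (e.g.\ ``$97$ is a prime'') are genuinely \emph{experienced} rather than merely \emph{tokened} as syntactic items in the sense of Fodor. I would address this by distinguishing the underlying neural/syntactic process, which belongs to the physical stages analogous to {\bf 1}--{\bf 7} in the sound analysis, from the phenomenal ``what it is like'' to entertain the thought, which belongs to stage {\bf 8}; only the latter is the subject of awareness, and it is exactly this latter component that the theorem identifies as a quale. This move parallels the earlier treatment of sound and keeps the proof faithful to the paper's methodology.
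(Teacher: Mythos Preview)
Your proposal is correct and follows essentially the same route as the paper: the paper's proof simply observes that we are aware of our thoughts and then invokes the Postulate of Awareness to conclude that thoughts must be qualia. Your version is a more fully articulated and defensively argued form of this same syllogism, with the additional Fodor/stage-{\bf 8} discussion going beyond what the paper actually needs for its two-line proof.
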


\begin{proof}
Evidently, we are aware of our thoughts. Since thoughts are the subjects of awareness, the Postulate of Awareness concludes that they must be qualia.\qed
\end{proof}
Indeed, memory pieces of qualia are the building blocks of thoughts. Thoughts arise as a string of sensory qualia. Even during abstract thinking, such as ``97 is a prime", the inner speech is based on memories of auditory qualia. Abstract concepts (e.g., ``justice" or ``freedom") are also tied to sensory or emotional qualia associated with personal or cultural experiences of these ideas. Moreover, awareness plays an important role to make the thinking process integrated and coherent. For example, one thinks more efficiently when there is sharp concentration on a subject. On the other hand, when one is drowsy or close to fall asleep, one's thoughts often become discrete and meaningless.

In light of Theorems \ref{nonphy} and \ref{thought}, we infer that thoughts are non-physical, which is consistent with our intuitive understanding. 
 
\section{Relative Reality}

``{\em A part of the universe becomes a part of the individual.}" 

\hspace{10.5cm} Davida Teller, \cite{TP} Ch. 6, p. 117

\vspace{2mm}

Although the formation of qualia requires the presence of both observers and physical stimuli, the qualia are not intrinsic to either observers or the physical stimuli. Theorem \ref{nonphy} indicates the non-physical nature of qualia. Studies in cognitive science, for instance \cite{GW63}, reveal that perception is not passive reception but active interpretation. The brain often ``fills in" gaps, filters information, and constructs coherent narratives from raw sensory data. Furthermore, sensations often interact in complex ways (e.g., taste is influenced by smell). This integration further highlights the constructed, non-physical nature of experience, bringing challenges to our understanding of reality and the role of perception in defining existence. This section elaborates on this profound philosophical and scientific issue. 

\subsection{The Reality as We Perceive it}

The concept of qualia challenges the classical view of an objective reality that exists independently of perception. If qualia, which form the basis of how we experience and interpret the world, are neither physical nor intrinsic to the observer or the stimuli, then the perceived reality is fundamentally shaped by the physical structure of the observers. Consequently, ``reality" is a subjective construct relative to the set of observers rather than a fixed, universal truth. This concept of {\em relative reality} aligns with insights from quantum mechanics, where the act of observing or measuring a microsystem influences its state and co-create the phenomena that we perceive as being ``real." It also partially resonates with some religious traditions, such as Mahayana Buddhism. In Buddhist philosophy, sensations are seen as conditioned phenomena arising from interactions between the observer's consciousness and the object. They are impermanent and devoid of intrinsic existence, pointing to the illusory nature of perceived reality. In classical idealist philosophies (e.g., those of Plato, Berkeley, Kant, Hegel) \cite{FG22}, reality is also seen as a construct of the mind. They believe that physical world has no independent existence apart from the sensations and ideas perceived by observers. Phenomenology, as developed by Husserl and others, explores the structure of experience. It emphasizes that our experience of phenomena is mediated through perception, which shapes our view of reality. Sensations, then, are central to the lived experience of existence. We consider the following examples.

{\bf 1}. {\em The Impact of Vision}. For humans, vision is the primary sense for gathering and interpreting information about the environment. This sensory dominance profoundly shapes how we conceptualize and describe reality. Concepts such as color, dimension, distance, shape, and size are deeply rooted in our visual perception, and their roles permeate almost every aspect of art, science, and mathematics. However, human vision is limited to a narrow band of the electromagnetic spectrum (visible light). Infrared, ultraviolet, and other forms of radiation are invisible to us but accessible to other species, meaning our perception of reality, as well as our description of it, are inherently constrained.

{\bf 2}. {\em Perception in Species Without Vision}. Different species perceive the same physical environment in vastly different ways due to variations in sensory capabilities. Species lacking vision rely on other senses to interact with their environment, resulting in profoundly different experiences of reality. For example, bats and dolphins emit sound waves and interpret the returning echoes to navigate, hunt, and understand their surroundings. Thus, their reality is shaped by auditory maps rather than visual landscapes. Moles, which live in dark underground environments, rely on touch-sensitive hairs and vibrations to perceive their surroundings. Their reality is defined by textures, pressures, and motion rather than light and color.
Sharks and some aquatic species detect electric fields emitted by other organisms. This electro-sensory perception enables them to locate prey even in complete darkness or murky water, offering a reality entirely alien to vision-dependent beings.

 Each organism inhabits a species-specific reality shaped by its sensory capabilities and ecological niche. What is real to one species may be irrelevant or imperceptible to another. Making attempts to explore and understand the sensory worlds of other organisms helps us gain a humbling and expansive appreciation for the diversity of existence.
  
\subsection{A Mathematical Model of Cognition}
The mathematical analogy described in Section 2.3 paves way for a rigorous scientific inquisition on the notion of relative reality. A set of vectors $S=\{{\bf e}_1, {\bf e}_2, ...\}$ is said to be an orthonormal basis for the Hilbert space $\HH$ if the following conditions are satisfied:

a) $\langle {\bf e}_i, {\bf e}_j\rangle=\delta_{ij}$, where $\delta_{ij}=0$ when $i\neq j$ and $\delta_{ii}=1$ for all $i,j \geq 1$;

b) Every vector ${\bf x}\in \HH$ can be uniquely expressed as ${\bf x}=x_1{\bf e}_1+x_2{\bf e}_2+\cdots$ for some complex numbers $x_1, x_2, ...$, which are called the coordinates of ${\bf x}$ with respect to the basis $S$. 

\noindent It is not hard to check that $x_j=\langle {\bf x}, {\bf e}_j\rangle$ for each $j$, and the following {\em Parseval's identity} holds:\[\|{\bf x}\|^2=|x_1|^2+|x_2|^2+\cdots.\]

It is worth noting that, although orthonormal basis for $\HH$ is not unique, some bases may be more preferable than others depending on the study at hand. Every finite dimensional subspace $M\subset \HH$, as a Hilbert space itself, has an orthonormal basis. To discuss in more generality, we assume $M$ is a finite dimensional subspace of $\HH$ with a preferred orthonormal basis  $F=\{{\bf e}_j\mid 1\leq j\leq n\}$. In the sequel, such a pair $(M, F)$ (written simply as $M$) shall be called an {\em observer}. The {\em projection operator} $P_M: \HH\to M$ is defined by 
\[P_M{\bf x}=\langle {\bf x}, {\bf e}_1\rangle {\bf e}_1+\cdots +\langle {\bf x}, {\bf e}_n\rangle {\bf e}_n,\ \ \ {\bf x}\in \HH.\]
For convenience, we denote $P_M(x)$ by $\x_M$.
If we regard ${\bf e}_j$ as the $j$th sensory faculty of observer $M$, then the complex number $\langle {\x}, {\e}_j\rangle$ represents the sensation generated by its function in response to the physical stimulus ${\bf x}$. It is worth noting that the projection operator $P_M$ does not depend on the choice of orthonormal basis $F$ for $M$. If $P_M{\bf x}=0$, then we say that the state ${\bf x}$ is not detected by the observer $M$.

\begin{defn} For any ${\bf x}\in \HH$, the {\em perception} of ${\bf x}$ by $M$ is defined as 
\[\per_M({\bf x})=(\langle {\bf x}, {\bf e}_1\rangle, ..., \langle {\bf x}, {\bf e}_n\rangle)\in \C^n.\]
\end{defn}
The perception operator $\per_M$ is a {\em linear map} from $\HH$ to $\C^n$ in the sense that \[\per_M (a{\bf x}+{\bf y})=a\per_M({\bf x})+\per_M({\bf y}),\ \ \ {\bf x}, {\bf y}\in \HH, \ a\in \C.\] It projects a physical state $\x$ in absolute reality to a piece of reality $\per_M(\x)$ relative to the observer $M$. Parseval's identity implies that $\|\x_M\|=\|\per_M ({\bf x})\|$ for every ${\bf x}\in \HH$. The space $\C^n$ above simulates the mind of the observer $M$. If $M$ is a human, then $n=6$, with ${\bf e}_1, ..., {\bf e}_6$ representing our six sensory faculties. It is worth noting here that Theorem \ref{thought} is indispensable. Otherwise, the mind would not be regarded as a sensory faculty and configured into this mathematical model.

\begin{defn}
Given a time interval $a\leq t\leq b$ and a physical state function ${\bf x}: [a, b]\to \HH$. 

a) For any observer $M\subset \HH$, the function $\per_M ({\bf x}(t))$ is called an {\em observation} of ${\bf x}$.

b) The state function $\x$ is called a piece of {\em reality} relative to observer $M$ and the time period $[a, b]$, provided that ${\bf x}_M(t)\neq 0$ for all $t\in [a, b]$.
\end{defn}

Moreover, if $M_1, ..., M_n$ are observers, and $\x_{M_k}(t)$ does not vanish on $[a, b]$ for every $1\leq k\leq n$, then we say ${\bf x}$ is a piece of reality relative to all observers $M_1, ..., M_n$. The following understanding is thus a consequence of Theorems \ref{nonphy}, \ref{thought}, and the discussion above.

\begin{cor}\label{rr}
All perceptions of reality are relative to time and the set of observers.
\end{cor}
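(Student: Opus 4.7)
The plan is to unpack the definitions of \emph{perception} and \emph{observation} directly, and to show that both the observer data $(M,F)$ and the time parameter $t$ enter irreducibly into the defining expression. Since the corollary is flagged as a consequence of Theorems \ref{nonphy}, \ref{thought}, and the framework of Section 4.2, most of the work will be interpretive rather than computational.

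First I would fix a physical state function $\x:[a,b]\to\HH$ and expand $\per_M(\x(t))=(\langle \x(t),\e_1\rangle,\ldots,\langle \x(t),\e_n\rangle)$. This makes the two modes of dependence transparent: varying $t$ changes $\x(t)$ and hence each inner product, while varying the pair $(M,F)$ changes the sensory faculties $\{\e_j\}$ and hence the components of the tuple. To make the observer-dependence concrete I would exhibit two observers $M_1,M_2\subset\HH$ with $M_1\perp M_2$ and a stimulus $\x$ lying in $M_1$: then $\per_{M_1}(\x)$ is nontrivial while $\per_{M_2}(\x)=0$, directly demonstrating that the ``reality'' of $\x$ is registered by one observer and invisible to the other. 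Extending to a finite collection $\{M_1,\ldots,M_k\}$ is immediate from the linearity of $\per$ and the definition of reality relative to a set of observers.

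Next I would invoke Theorem \ref{nonphy} to argue that each coordinate $\langle \x(t),\e_j\rangle$ is a sensation carrying a non-physical quale component $e^{i\alpha}$, so that the perception tuple is not an intrinsic physical attribute of $\x(t)$ but arises from the interaction between $\x$ and the faculties constituting $M$. Theorem \ref{thought} enters to justify including the mental faculty (the sixth coordinate when $M$ is a human observer) among the $\e_j$, so that the model covers perceptions mediated by thought as well as by the five bodily senses. Together these ensure that no component of $\per_M(\x(t))$ can be divorced from both a choice of observer and a choice of time.

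I expect the main obstacle to be conceptual rather than technical: the statement is so close to a restatement of the definitions of $\per_M$ and of ``reality relative to $(M,[a,b])$'' that the challenge is to present the argument in a way that makes the logical dependence on the two earlier theorems essential rather than cosmetic. In particular, without Theorem \ref{thought} one could object that cognitive perceptions fall outside the Hilbert-space model, and without Theorem \ref{nonphy} the perception might be mistaken for a purely physical projection rather than a genuine relativity phenomenon. I would therefore structure the write-up so that both hypotheses are visibly used, with the time-dependence following trivially from the $t$-dependence of $\x(t)$ and the observer-dependence following from the non-canonical choice of the subspace and basis defining $M$.
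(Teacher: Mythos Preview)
Your proposal is correct and aligns with the paper's approach: the paper offers no separate proof for this corollary, stating only that it is ``a consequence of Theorems \ref{nonphy}, \ref{thought}, and the discussion above,'' and your write-up is precisely a faithful unpacking of that sentence. Your identification of the roles of the two theorems (Theorem \ref{nonphy} to ensure perceptions are not intrinsic physical attributes, Theorem \ref{thought} to legitimize the mind as a basis vector $\e_j$) matches the paper's intent exactly, and your orthogonal-observer example is a reasonable concretization that the paper supplies only later in Example 4.4.
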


The following example offers a meaningful illustration to Corollary \ref{rr}. 

\begin{examp}
We let $H_k$ denote the set of living humans throughout the calendar year $k$. Then the Sun and the Moon are elements in the relative reality of both $H_0$ and $H_{2024}$, whereas cell phones are for $H_{2024}$ but not for $H_0$. Similarly, if Alice is born blind and Bob is born deaf, then music is an element in Alice's reality but not Bob's, while painting is a part of the reality for Bob but not Alice's. Interestingly, neither Alice nor Bob is aware what is missing in their perceptions of reality.
\end{examp}

\section{Logic: Breaking the Boundaries of Relative Reality}

The physical capability of sensory faculties sets boundaries for an observer's perception of reality. This limitation appears to constrain our pursuit for knowledge about nature. For example, we may never know how Beethoven's {\em Pastoral Symphony} sounds to a cat or how candy tastes to a dog. But scientific and technological advancements have helped us significantly extend the capacity of our sensory faculties, thereby expanding our sphere of perceived reality. A powerful tool in this endeavor is logic. While logic appears more universal and self-evident, its principles have in fact also been shaped by our perceived reality, albeit in an abstract and indirect way. One compelling example to illustrate this point is the development of non-Euclidean geometry.

\subsection{A Note on Euclidean Geometry} Ancient civilizations believed that the Earth was flat, with the Sun rising in the East and setting in the West. This perception shaped early worldviews and inspired the development of Euclidean geometry, which assumes a flat, two-dimensional plane as its foundational framework. Five fundamental postulates, as outlined by Euclid (c. 300 BC) in his seminal work {\em Elements}, form the foundation of Euclidean geometry. They are as follows:

\vspace{1mm}
{\em 
a) A straight line can be drawn connecting any two points.

b) A finite straight line can be extended indefinitely in a straight line.

c) A circle can be drawn with any center and any radius.

d) All right angles are equal to one another.

e) The Parallel Postulate: Given a line and a point not on it, there is exactly one line parallel to the given line that passes through the point.}
\vspace{1mm}

\noindent Two lines are said to be parallel if they do not intersect. Note that postulate e) is not the original statement in {\em Elements} but a simplified version of it. Clearly, these postulates align with people's perceived reality in a local scale, although e) is less self-evident than the others. Attempts to prove the Parallel Postulate by the first four postulates have been numerous and extended for many centuries. These attempts undoubtedly all failed, but they finally led to the development of non-Euclidean geometries (elliptic and hyperbolic) in the 19th century \cite{Gr07}, where the Parallel Postulate no longer holds:

a) {\em Elliptic geometry} is founded on the first four postulates together with the assertion that every two ``straight lines" intersect (i.e., there are no parallel lines). This is in fact the geometry of the Earth's surface, where ``straight lines" are big circles with radius equal to that of the Earth. 

b) {\em Hyperbolic geometry} is grounded on the first four postulates together with the assertion that, given a line and a point not on it, there are two or more lines parallel to the given line and pass through the point. Figure 1 below shows an example of hyperbolic geometry. Assume that an observer T is standing at the center of the disc and cannot move (picturing a lone intelligent tree standing on a flat prairie). Due to limited range of its perception, the disc represents the whole world (the prairie) for T. Both lines $\ell_1$ and $\ell_2$ pass through the point $p$ and are parallel to the line $\ell$ from the perspective of T. Note that the point $q$ is not in the world of T. 

Indeed, the development of non-Euclidean geometry constitutes one of the most magnificent pieces of human history, filled with defeat, persistence, and triumph. We can derive some profound insights from it:
 
\begin{figure}[t]
\begin{center}
\scalebox{0.26}[0.26]{\includegraphics{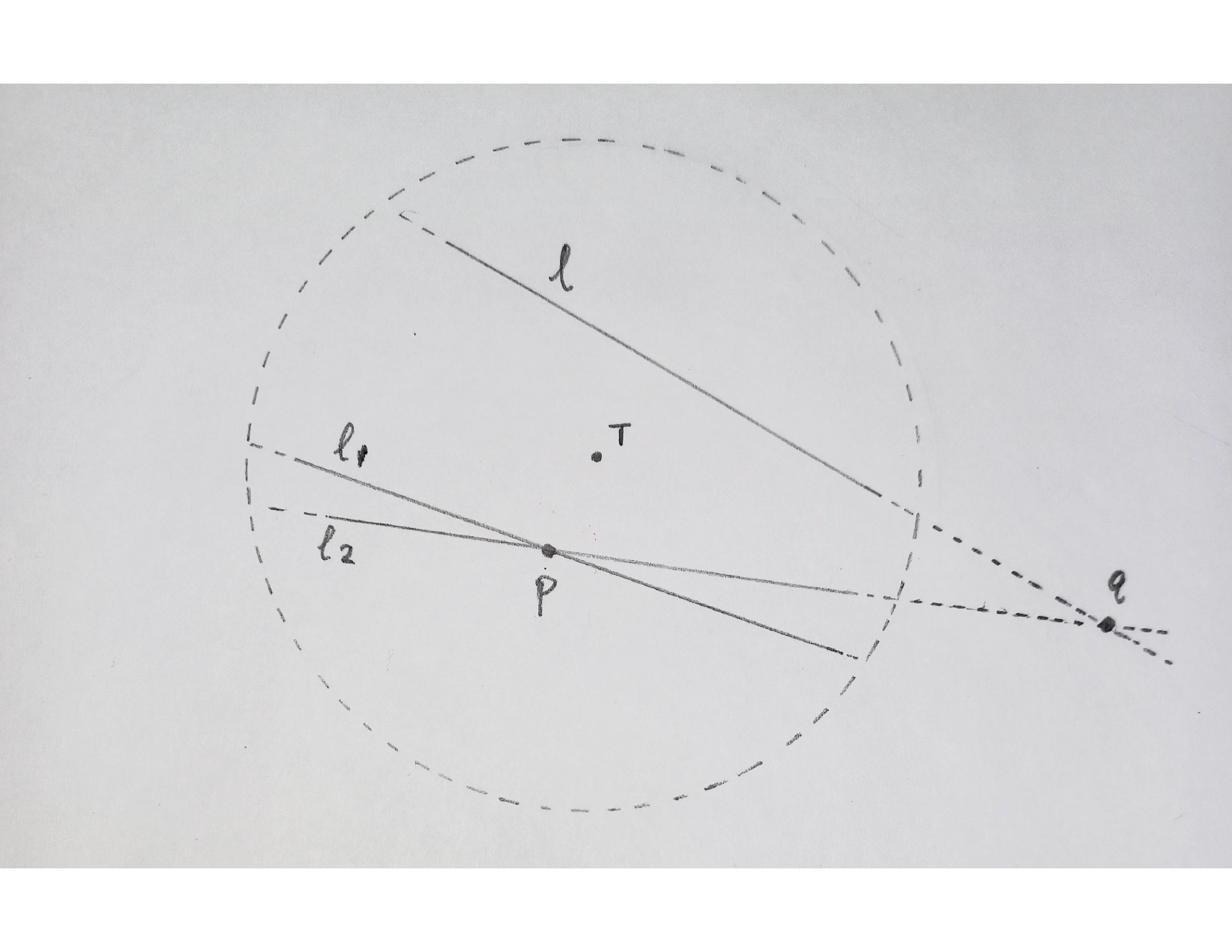}}
\end{center}
\caption{Geometry From the Perspective of a Tree}
\end{figure}

{\bf 1}. {\em Observer-Centric Reality}. Geometric concepts vary depending on the observer. Euclidean geometry describes the local properties of the Earth's surface, approximating it by a tangent plane to the sphere. When technology (e.g., telescopes, satellites) increased human's range of perception, it became evident that our actual geometry is elliptic. However, for an intelligent tree, the geometry might be hyperbolic.

{\bf 2}. {\em The Slow Evolution of Perspectives}. It took humanity over two millennia to challenge and transcend the traditional geometric views shaped by relative reality, eventually arriving at non-Euclidean geometries. This slow progression highlights the deep-seated influence of initial perceptions and the inertia of established paradigms.

{\bf 3}. {\em Logical Consistency in Relative Reality}. Euclidean geometry, despite being founded on relative reality-influenced by an illusory perception of flat Earth-is logically consistent. This fact seems to validate the belief that all phenomena are rational, as  expressed in Hegel's famous dictum, ``What is rational is actual, and what is actual is rational."

{\bf 4}. {\em The Universality of Logic}. Non-Euclidean geometry underscores the enduring validity of logic across different relative realities. The shift from Euclidean to non-Euclidean geometries demonstrates that logical reasoning can bridge diverse conceptual frameworks based on relative realities, whether describing flat planes or curved spaces, celestial bodies or minuscule particles.

Therefore, logical reasoning, often expressed in mathematical forms, is capable of transcending the constraints of relative reality to uncover universal natural laws. The discoveries of Pluto in 1930 and the Higgs boson in 2012 serve as two illuminating examples of this sort. Pluto's discovery was not the result of direct observation but rather the outcome of meticulous mathematical reasoning. Astronomers noticed perturbations in the orbits of Uranus and Neptune, which could not be fully explained by known gravitational forces. Using mathematical models of celestial mechanics, they hypothesized the existence of another massive object-later confirmed to be Pluto. The Higgs boson, often called the ``God particle," was discovered at CERN's Large Hadron Collider after decades of theoretical work. The existence of this particle was first proposed in the 1960s through mathematical models in quantum field theory to explain how particles acquire mass via the Higgs field. This discovery validated the Standard Model of particle physics, a mathematical framework that describes the fundamental forces and particles of the universe.

\subsection{The Origin of Logic Principles}
 
The preceding discussion on geometry invites a deeper exploration of logical principles, particularly to examine the root of their universality. A profound question remains to be addressed: are these principles inherent in the fabric of reality ({\em realism}), or are they constructs shaped by human cognition and perception ({\em constructivism})? Among the most fundamental are the following principles:

\vspace{2mm}
a) {\em The Principle of Identity}: A thing is what it is (e.g., $A$ is $A$).

b) {\em The Principle of Non-Contradiction}: Contradictory statements cannot both be true at the same time (e.g., $A$ and not $A$ cannot both be true).

c) {\em The Principle of Excluded Middle}: For any proposition, either it is true, or its negation is true (e.g., $A$ or not $A$ must be true).
\vspace{2mm}

\noindent Human reasoning, grounded in language, suggests that languages have played a pivotal role in shaping and formalizing the logical principles. Notably, the set of nouns in a language establishes a one-to-one correspondence between objects in the natural environment and human cognition of them. This correspondence effectively transfers basic natural laws into the mental domain. For instance, the nouns ``apple" and ``strawberry" denote two distinct entities, each possessing a unique identity that remains consistent throughout its existence. This understanding underpins the Principle of Identity. Furthermore, the observation that an apple cannot simultaneously be a strawberry reflects the Principle of Non-Contradiction. Finally, the realization that any given fruit is either an apple or not an apple encapsulates the Principle of Excluded Middle. These logic principles mirror universal properties of natural phenomena across different relative reality contexts. For instance, the afore-mentioned properties of apple and strawberry remain valid from the perspective of other animals. If cats were to develop a complex language system like ours, it is plausible that they will also abide by these logic rules. 

The use of nouns has been a fundamental aspect of human communication since the dawn of civilization. Over time, the rules governing their use evolved into basic logical principles. To address the question posed at the beginning of this subsection, although languages arise from our perception of relative reality, the logical principles reflects universal properties of nature that transcend human experience. With the advent of mathematics and philosophy, logic became an abstract system, detached from particular relative reality contexts, enabling us to explore and reason beyond the boundaries of our immediate perception.  This understanding is a cornerstone of this article.

\section{An Application to Physics}

Hilbert space and projection operators are foundational elements for the quantum theory. By making a connection with qualia and perception, the model developed in Sect. 4.2 presents a philosophical explanation for their prominence. This section describes a technical aspect of this model by exploring an application to physics.

If an experiment has $n$ outcome states $\e_j\in\HH, j=1, ..., n$, then $M:=\spa \{\e_1, ..., \e_n\}$ can be regarded as an observer associated with the experiment. Conversely, an arbitrary observer $M\subset \HH$ with an orthonormal basis $\{\e_1, ..., \e_n\}$ can also be viewed as an experiment with outcome states $\e_1, ..., \e_n$. In this section, the terminologies ``experiment", ``observer", and ``subspace" refer to the same thing, but they are used in different places depending on the context. Recall that $P_M$ denotes the orthogonal projection from $\HH$ onto the subspace $M$. From the perspective of $M$, every state in $\HH$ is a superposition of the vectors $\e_1, ..., \e_n$ (which is correct only when $\x\in M$). If $\x$ and $\y$ are states in $\HH$, then the inner product $\langle \x, \y \rangle$ can be viewed as a piece of {\em information} generated by and shared between $\x$ and $\y$. In the case $P_M\x=0$, i.e., $\langle \x, \e_j \rangle=0$ for each $j$, we can say that $M$ does not detect the existence of $\x$. Furthermore, a {\em physical process} (or {\em state function}) refers to a differentiable function $\x: [0, \infty)\to \HH$, and it is said to {\em conserve energy} if $\|\x(t)\|$ is constant. This framework can sheds some new light on energy conserving physical processes. The discussion here is preliminary. More explorations will be made in another paper.

\subsection{Schr\"{o}dinger's Equation} 

If a physical process $\x(t)$ conserves energy, then it evolves according to Schr\"{o}dinger's equation
\begin{equation}\label{sch}
i \hbar\frac{d\x(t)}{dt}=H\x(t),\ \ \ t\geq 0,
\end{equation}
where $i=\sqrt{-1}$, $\hbar$ is the reduced Planck constant, and $H$, often referred to as the {\em Hamiltonian operator} or {\em energy operator}, is a Hermitian linear operator on $\HH$ (possibly unbounded) that characterizes the evolution. In quantum mechanics, such operator $H$ is also called an {\em observable}. The derivation of this equation is simple and worth seeing. The conservation of energy implies that the evolution is {\em unitary} in the sense that $\x(t)=U(t)\x_0$, where $U(t), t\geq 0$, is a differentiable function with unitary operator values that satisfies the following properties:

a) $U(0)=I$, where $I$ stands for the identity operator,

b) $U(s+t)=U(s)U(t)$,

\noindent Condition a) sets the initial condition $\x(0)=\x_0$, and condition b) means that the evolution from time $t'=0$ to $t'=s+t$ is equal to the evolution from $t'=0$ to $t'=s$ followed by the evolution from $t'=s$ to $t'=s+t$. Then, using a) and b), one has
\begin{align*}
\frac{d\x(t)}{dt}&=\lim_{\Delta t\to 0}\frac{1}{\Delta t}(U(\Delta t+t)\x_0-U(t)\x_0)\\
&=\lim_{\Delta t\to 0}\frac{1}{\Delta t}(U(\Delta t)-I)U(t)\x_0=U'(0)\x(t).
\end{align*}
Differentiating the equation $U(t)U^*(t)=I$, one obtains
\[U'(t)U^*(t)+U(t)(U^*)'(t)=0.\]
Setting $t=0$ and using a), one sees that $U'(0)+(U^*)'(0)=0$, i.e., $U'(0)$ is skew Hermitian. Hence, if we set $H=\frac{-i}{\hbar}U'(0)$, then $H$ is Hermitian, and one arrives at equation (\ref{sch}). It is worth noting that, from this perspective, the appearance of the imaginary number $i$ is in fact a consequence of the conservation of energy. Equation (\ref{sch}) is not hard to solve:
\begin{equation}\label{solution}
\x(t)=e^{\frac{-i}{\hbar}Ht}\x(0), \ \ \ t\geq 0.
\end{equation}
\noindent Since $H$ commutes with $e^{\frac{-i}{\hbar}Ht}$, taking the derivative of $\x(t)$ in (\ref{solution}) we obtain
\[\x'(t)=\frac{-i}{\hbar}e^{\frac{-i}{\hbar}Ht}H\x(0), \ \ \ t\geq 0,\]
which implies $\|\x'(t)\|=\frac{1}{\hbar}\|H\x(0)\|=\|\x'(0)\|$. 

\begin{cor}\label{uncertainty}
If a physical process $\x(t)$ conserves energy, then so does $\x'(t)$.
\end{cor}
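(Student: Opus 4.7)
The plan is to observe that $\x'(t)$ itself satisfies a Schrödinger-type equation with the same Hamiltonian $H$, and hence inherits the norm-preservation that was just established for $\x(t)$. Concretely, starting from (\ref{sch}), I would write $\x'(t) = \tfrac{-i}{\hbar} H \x(t)$ and then substitute the explicit solution (\ref{solution}) to obtain
\[
\x'(t) = \tfrac{-i}{\hbar}\, H\, e^{-iHt/\hbar}\x(0) = e^{-iHt/\hbar}\left(\tfrac{-i}{\hbar} H \x(0)\right) = U(t)\,\x'(0),
\]
where the middle equality uses that $H$ commutes with the functional calculus expression $e^{-iHt/\hbar}$. Since $U(t)$ is unitary, $\|\x'(t)\| = \|\x'(0)\|$ for all $t\geq 0$, which is precisely the statement that $\x'$ conserves energy.

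An equivalent route would be to differentiate Schrödinger's equation in $t$: since $H$ is a fixed (time-independent) Hermitian operator, we obtain $i\hbar\,\x''(t) = H\,\x'(t)$, so $\x'$ satisfies the very same Schrödinger equation with the same $H$, and the unitary-evolution argument used to derive (\ref{sch}) applies verbatim to yield $\|\x'(t)\|$ constant. Either formulation makes the result a direct corollary of the preceding derivation; in fact, the identity $\|\x'(t)\| = \tfrac{1}{\hbar}\|H\x(0)\| = \|\x'(0)\|$ already displayed just above the statement of the corollary suffices.

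The main obstacle, if any, is merely conceptual rather than technical: one must verify that the functional calculus is legitimate when $H$ is unbounded, so that $H$ commutes with $e^{-iHt/\hbar}$ on the relevant domain and $\x'(0)$ lies in that domain. This is standard in Stone's theorem for strongly continuous one-parameter unitary groups, and I would invoke it briefly rather than reprove it. No further calculation is needed.
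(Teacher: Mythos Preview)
Your proposal is correct and matches the paper's argument essentially verbatim: the paper differentiates the explicit solution (\ref{solution}), uses that $H$ commutes with $e^{-iHt/\hbar}$, and reads off $\|\x'(t)\|=\tfrac{1}{\hbar}\|H\x(0)\|=\|\x'(0)\|$, exactly the identity you highlight. Your alternative route via differentiating (\ref{sch}) and the remark on Stone's theorem are sound but go slightly beyond what the paper records.
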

\noindent The meaning of this corollary will be addressed later in this section.

If $M\subset \HH$ is an observer, then applying the projection $P_M$ to equation (\ref{sch}), one has 
\begin{align*}
i \hbar\frac{d\x_M(t)}{dt}=P_MH\x(t)=P_MHP_M\x(t)+P_MH(I-P_M)\x(t).
\end{align*}
Denoting $P_MHP_M$ by $H_M$ and using the property $P_M=P_M^2$, one arrives at 
\begin{equation}\label{schM}
i \hbar\frac{d\x_M(t)}{dt}=H_M\x_M(t)+P_MH(I-P_M)\x(t).
\end{equation}
Note that if $\x(t)$ evolves inside $M$, i.e., $\x(t)\in M$ for each $t\geq 0$, then $(I-P_M)\x(t)=0$, and equation (\ref{schM}) reduces to the Schr\"{o}dinger equation (\ref{sch}). There is another scenario for this to happen. We make the following definitions before continuing.

\begin{defn}
An observer $M\subset \HH$ is said to be {\em invariant} for $\x$ if $P_MH=HP_M$.
\end{defn}
\noindent Multiplying the equation $P_MH=HP_M$ by $P_M$ on the right, we obtain $P_MHP_M=HP_M=P_MH$, which implies that $P_MH(I-P_M)=(I-P_M)HP_M=0$. In terms of operator theory, such an $M$ is called an {\em invariant subspace} for $H$. It is well-known that every Hermitian operator on $\HH$ has nontrivial (i.e. $M\neq \{0\}$ or $\HH$) invariant subspaces \cite[Ch X. sect. 4]{Con}. 

\begin{cor}\label{inv}
Assume that a physical state $\x$ evolves unitarily in $\HH$. Then $\x$ has nontrivial invariant observers.
\end{cor}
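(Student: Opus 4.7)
The plan is to extract the Hermitian generator from the unitary evolution of $\x$ and to produce a finite-dimensional invariant subspace for it via classical spectral theory. Revisiting the derivation preceding equation (\ref{sch}), the hypothesis that $\x$ evolves unitarily yields $\x(t) = U(t)\x_0$ with $U$ a differentiable unitary one-parameter family, and the argument given there constructs a Hermitian generator $H = \frac{-i}{\hbar}U'(0)$. All the dynamical information is thus encoded in this single Hermitian operator on $\HH$.

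Next I would invoke the cited fact that every Hermitian operator on $\HH$ admits a nontrivial closed invariant subspace, and refine it to a finite-dimensional one. Via the spectral theorem this is concrete: if $H$ has an eigenvalue $\lb$ with unit eigenvector $\e$, then $M = \spa\{\e\}$ is a one-dimensional invariant subspace; more generally, any finite direct sum of eigenspaces of $H$ provides a finite-dimensional invariant subspace with a natural orthonormal basis, hence a genuine observer in the sense of Section 4.2.

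It then remains to translate operator-theoretic invariance ($HM \subseteq M$) into the paper's condition ($P_M H = H P_M$). Because $H$ is Hermitian, invariance of $M$ automatically forces invariance of $M^{\perp}$: for $v \in M^{\perp}$ and $w \in M$, $\langle Hv, w\rangle = \langle v, Hw\rangle = 0$ since $Hw \in M$. Hence $H$ is block-diagonal with respect to the orthogonal decomposition $\HH = M \oplus M^{\perp}$, which is exactly the statement $P_M H = H P_M$. Combining this with nontriviality of $M$ gives the conclusion.

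The main obstacle is producing a \emph{finite}-dimensional invariant subspace when $H$ has purely continuous spectrum, where no eigenvectors exist. The argument sketched above covers all physically standard bound-state Hamiltonians (harmonic oscillator, hydrogen atom, and so on), which appears to be the intended scope of the corollary; for the purely continuous case one would need either to enlarge the notion of observer to include infinite-dimensional closed subspaces—in which case the spectral projections $E(B)$ for bounded Borel sets $B \subseteq \sigma(H)$ immediately supply the required invariant subspaces—or to supplement the cited invariant-subspace theorem with a separate argument to cut down to finite dimension. I expect the authors implicitly adopt the first reading.
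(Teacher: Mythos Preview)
Your approach is essentially the paper's own: the corollary is stated immediately after the sentence ``It is well-known that every Hermitian operator on $\HH$ has nontrivial \ldots\ invariant subspaces \cite[Ch.~X, Sect.~4]{Con},'' and that citation is the entire argument the paper offers. Your steps---extracting the Hermitian generator $H$ from the unitary group, invoking the invariant-subspace result, and verifying that $HM\subseteq M$ for Hermitian $H$ is equivalent to $P_MH=HP_M$---simply make explicit what the paper leaves implicit.

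Your closing paragraph on the finite-dimensionality issue is well taken and in fact sharper than the paper. Observers are defined in Section~4.2 as finite-dimensional subspaces, yet the spectral-theoretic fact cited only guarantees a nontrivial \emph{closed} invariant subspace, not a finite-dimensional one; when $H$ has purely continuous spectrum no finite-dimensional invariant subspace exists. The paper does not address this tension, and the very next subsection (``Finite Dimensional Observers'') reintroduces the finite-dimensionality hypothesis as an additional assumption, which supports your reading that the corollary is tacitly allowing infinite-dimensional $M$ or restricting attention to Hamiltonians with point spectrum. So your proposal is correct and, on this point, more careful than the source.
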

\noindent  In this case, equation (\ref{schM}) loses the second summand on the right and consequently has the solution $\x_M(t)=e^{\frac{-i}{\hbar}H_Mt}\x_M(0),\ t\geq 0$. Thus, from the perspective of $M$, the evolution of $\x$ conserves energy.

However, in the case $M$ is not invariant for $\x$, the evolution of $\x_M$ may not be unitary (see the next subsection), and hence the conservation of energy may not hold from the perspective of $M$. One method to address this issue is to enlarge $M$ to a bigger subspace that is invariant for $\x$. This is in fact a common practice in physics.

\subsection{Finite Dimensional Observers}

Since an observer $M\subset \HH$ can only measure a finite number of aspects of a physical state $\x$, it makes sense to assume that $M$ has a finite dimension $n$. Then the operator $H_M$ in (\ref{schM}) is a finite rank Hermitian operator acting on $M$. Let $S:=\{\e_1, ..., \e_n\}$ be an orthonormal basis for $M$ and assume $\x_M(t)=x_1(t)\e_1+\cdots +x_n(t)\e_n$, where each $x_j$ is a complex scalar functions describing the evolution of $\x(t)$ from the perspective of observer $\{\e_j\}$. The dimension $n$ can be thought of as the degree of freedom for the state $\x_M$. If $\x_M(t)$ conserves energy, then Corollary \ref{uncertainty} takes the form
\begin{equation}\label{uncer1}
|x'_1(t)|^2+\cdots +|x'_n(t)|^2=\|\x'_M(0)\|^2.
\end{equation} 
In the case the basis $S$ is formed by eigenstates of $H_M$ with corresponding eigenvalue $\lb_j, 1\leq j\leq n$, the matrix representation of $H_M$ with respect to $S$ is diagonal with entries $\lb_j$. If the evolution $\x_M(t)$ conserves energy, then solving equation (\ref{sch}) with $\x$ and $H$ being substituted by $\x_M$ and resp. $H_M$, we obtain
\begin{equation}\label{conservE}
\x_M(t)=e^{-i\lb_1t/\hbar}x_1(0)\e_1+\cdots +e^{-i\lb_nt/\hbar}x_n(0)\e_n,
\end{equation}
showing that every eigenstate component $x_j(t)\e_j$ of $\x_M(t)$ evolves unitarily. 
\begin{cor}\label{conserv}
If $\x_M$ conserves energy, then there is an orthonormal basis $\{\e_1, ..., \e_n\}$ consisting of eigenstates of $H_M$ with respect to which $\x_M$ is of the form (\ref{conservE}).
\end{cor}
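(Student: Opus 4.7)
The plan is a two-step application of the finite-dimensional spectral theorem followed by the solution of a decoupled linear ODE. First, since $H_M=P_MHP_M$ is a Hermitian operator on the finite-dimensional space $M$, the spectral theorem supplies an orthonormal basis $\{\e_1,\ldots,\e_n\}$ of $M$ with $H_M\e_j=\lb_j\e_j$ for real eigenvalues $\lb_j$. This is precisely the basis whose existence is asserted in the statement, so the existence half of the claim is handed to us for free.

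Next, I would argue that because $\x_M$ conserves energy its evolution inside $M$ is unitary, so the derivation that produced equation (\ref{sch}) applies verbatim with $\x,H$ replaced by $\x_M,H_M$: one writes $\x_M(t)=U_M(t)\x_M(0)$ for a differentiable unitary family on $M$, differentiates $U_M(t)U_M^*(t)=I$ at $t=0$ to conclude that $U_M'(0)$ is skew Hermitian, and then identifies $H_M=\frac{-i}{\hbar}U_M'(0)$, yielding $i\hbar\,d\x_M/dt=H_M\x_M$. Expanding $\x_M(t)=\sum_{j=1}^n x_j(t)\e_j$ in the eigenbasis produced in the first step and invoking $H_M\e_j=\lb_j\e_j$, the equation decouples into the $n$ scalar ODEs $i\hbar\,x_j'(t)=\lb_j x_j(t)$, each solved by $x_j(t)=e^{-i\lb_j t/\hbar}x_j(0)$. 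Substituting these back into the expansion recovers (\ref{conservE}) verbatim.

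The main obstacle is the second step, specifically the identification of the generator of the dynamics on $M$ with $H_M=P_MHP_M$. Strictly, equation (\ref{schM}) asserts $i\hbar\,d\x_M/dt=H_M\x_M+P_MH(I-P_M)\x$, and the cross-term on the right vanishes only when $M$ is an invariant observer for $\x$. For a general $M$, energy conservation of $\x_M$ guarantees merely that \emph{some} Hermitian operator on $M$ drives the restricted unitary dynamics, and one must verify that this effective generator coincides with $H_M$ up to the choice of orthonormal basis. The cleanest route is to restrict the discussion to invariant observers (which by Corollary \ref{inv} always exist), so that the cross-term is zero by construction; once this is done, the spectral-theorem-plus-ODE argument above delivers (\ref{conservE}) with no further work.
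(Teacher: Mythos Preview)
Your approach mirrors the paper's exactly: the corollary is stated as an immediate consequence of the preceding paragraph, which simply says that if $\x_M$ conserves energy one ``solv[es] equation (\ref{sch}) with $\x$ and $H$ being substituted by $\x_M$ and resp.\ $H_M$'' in the eigenbasis of $H_M$ to obtain (\ref{conservE}). The obstacle you flag---that energy conservation of $\x_M$ alone guarantees only \emph{some} Hermitian generator on $M$, not specifically $H_M=P_MHP_M$, since the cross-term in (\ref{schM}) need not vanish---is genuine, but the paper makes no attempt to address it; the substitution is asserted without argument. Your proposed fix of restricting to invariant observers is the honest way to close the gap, and in that respect you have been more careful than the source.
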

\noindent In particular, as noted after Corollary \ref{inv}, if $M$ is an invariant observer, then the evolution of $\x_M(t)$ conserves energy and hence is of the form (\ref{conservE}). However, if we choose a different orthonormal basis for $M$, say $\{\hat{\e}_1, ..., \hat{\e}_n\}$ not consisting of the eigenstates of $H_M$, and write $\x_M(t)=\hat{x}_1(t)\hat{\e}_1+\cdots +\hat{x}_n(t)\hat{\e}_n$, then $|\hat{x}_j(t)|$ may not be constant, and it could even vanish, although the sum remains constant: \[ |\hat{x}_1(t)|^2+\cdots +|\hat{x}_n(t)|^2=\|\x_M(t)\|^2=\|\x_M(0)\|^2,\ \ \ 0\leq t\leq b.\]

When $M$ is not invariant for $\x$, the term $P_MH_M(I-P_M)\x(t)$ in equation (\ref{schM}) may not vanish, and we can write it as $h_1(t)\e_1+\cdots +h_n(t)\e_n$, where $h_j, j=1, ..., n$, are scalar functions. Then, equation ({\ref{schM}) turns into a system of linear equations 
\[i\hbar\frac{dx_j}{dt}=\lb_jx_j+h_j(t),\]which has solutions 
\begin{equation}\label{evolution}
x_j(t)=e^{-i\lb_jt/\hbar}\left(x_j(0)+\frac{-i}{\hbar}\int_0^t h_j(s)e^{i\lb_js}ds\right),\ \ \ j=1, ..., n.
\end{equation}

\subsection{The G. P. Thomson Experiment} Before proceeding with more theoretic discussions, we use an example to illustrate the meaning of the coefficient functions $x_j(t)$ in (\ref{conservE}). If $M=\{\e\}$ is $1$-dimensional and conserves the energy of $\x(t)$, then equation (\ref{conservE}) gives $\x_M(t)=e^{-i\lb t/\hbar}x(0)\e$. Note first that if $\lb=0$, then $\x_M(t)$ is in a constant state. For $\lb\neq 0$, given any constant state $\y\in \HH$, the inner product $\langle \x_M(t), \y\rangle=e^{-i\lb t/\hbar}x(0)\langle \e, \y\rangle$ is a complex number reflecting a piece of information about the interaction of $\x_M$ with $\y$ at time $t$. The set of such numbers \[C_\x:=\{\langle \x_M(t), \y\rangle \in \C\mid t\geq 0\}\] is a circle with center $0$ and radius $|x(0)\langle \e, \y\rangle |$. So from the perspective of the observer $\{\y\}$, the physical process $\x$ is a rotation with frequency $\lb/\hbar$. 

Suppose $\x_M^n, n=1, 2, ...$ is a sequences of identical physical processes with possibly a finite number of different initiate states $\x_M^n(0)$, for instance a ray of electrons. Then $\langle \x_M^n, \y\rangle$ is a sequence of complex numbers that gradually fill up several concentric circles with radii $|x^n(0)\langle \e, \y\rangle |$. 

In the 1927 George P. Thomson experiment \cite{Th27}, a fine beam of homogeneous cathode rays is sent nearly normally through a thin celluloid film about $3\times 10^{-6}$ cm. thick and then received on a photographic plate $10$ cm. away and parallel to the
film. Thomson found that the central spot formed by the undeflected rays is surrounded by some concentric rings whose radii decrease with the increasing energy of the rays. This experiment provided crucial evidence for the wave-like nature of electrons, confirming the de Broglie matter wave hypothesis. 

In the observation above, we let $\y$ represents the photographic plate and $\x_M^n, n=1, 2, ...$ represent the ray of electrons deflected by the celluloid film. Then the electrons have varying initial states $\x_M^n(0)$ caused by the deflection, enabling them to produce concentric circles of varying radii on the plate. This seems consistent with the outcome of the Thomson experiment. However, this illustration is hypothetical because the presumed connection between the initial states $\x_M^n(0)$ and the deflection by the film is not tested. If this hypothesis is indeed confirmed, then we know that the wave nature of the electron is due to the factor $e^{-i\lb t/\hbar}$ in $\x_M$ which is rooted in the conservation of energy. The expression $\x_M(t)=e^{-i\lb t/\hbar}x(0)\e$ then embodies the particle-wave duality.

\subsection{Harmonic Oscillator}
Harmonic oscillator refers to systems where the force acting on an object is negatively proportional to its displacement from equilibrium. Without friction, the sum of the potential energy and the kinetic energy is conserved. Such systems play a pivotal role in both classical and quantum mechanics. In a classical spring-mass system without friction, the energy $E$ is given by 
\[E=\frac{1}{2}k\ell^2+\frac{p^2}{2m},\]
where $k$ is the spring constant, $\ell$ is the displacement from equilibrium, and $p$ is the momentum of the mass $m$. The first term on the right is the potential energy and the second is the kinetic energy. In the case of quantum harmonic oscillator, the Hamiltonian is of the form
\[H=\frac{1}{2}k\hat{\ell}^2+\frac{\hat{p}^2}{2m},\]
where the position operator $\hat{\ell}$ and the momentum operator $\hat{p}$ are unbounded Hermitian operators densely defined on the Hilbert space $L^2(\mathbb{R})$, and the eigenvalues of $H$ are quantized energy levels of the oscillator. Details about quantum harmonic oscillator can be found in any quantum mechanics textbook, for instance \cite{Sh94}. As an illustration of Corollary \ref{conservE}, this subsection determines the eigenstates for the classical harmonic oscillator.

We let $\e_{\ell}$ and $\e_p$ stand for the unit position and momentum states, respectively. Then they are orthogonal and span a two dimensional observer $M$. If the evolution of the system is described by $\x(t)$, then $\x_M(t)=x_{\ell}(t)\e_\ell+x_p(t)\e_p$, where the scalar functions satisfy: \[|x_\ell(t)|^2=\frac{1}{2}k\ell^2(t), \hspace{2cm} |x_p(t)|^2=\frac{p^2(t)}{2m},\]
and both quantities varies with respect to time. For simplicity, we set the initial condition $\ell(0)=0, \ell'(0)=1$, and let $\omega:=\sqrt{\frac{k}{m}}$. Solving the differential equation $\ell''(t)+\omega^2 \ell=0$, we obtain $\ell(t)=\sin (\omega t)/\omega$. Taking into account the relation $p(t)=m\ell'(t)$, we have $x_\ell(t)=\sqrt{\frac{m}{2}}\sin (\omega t)$ and $x_p(t)=\sqrt{\frac{m}{2}}\cos (\omega t)$. 

Interestingly, according to Corollary \ref{conserv}, there exists an orthonormal basis $\{\e_1, \e_2\}$, consisting of eigenstates for $H_M$, such that $\x_M$ is given by the form (\ref{conservE}) for $n=2$. Indeed, some computations can verify that one such basis is given by
\[\e_1=\frac{1}{\sqrt{2}}(\e_\ell-i\e_p), \hspace{2cm} \e_2=\frac{1}{\sqrt{2}}(\e_\ell+i\e_p).\]
If we write $\x_M(t)=x_1(t)\e_1+x_2(t)\e_2$, then it can be verified that
$x_1(t)=\frac{i}{2}\sqrt{m}e^{-i\omega t}$ and $x_2(t)=\frac{-i}{2}\sqrt{m}e^{i\omega t}$. It follows that
$\|\x'_M(t)\|^2=\frac{1}{2}m\omega^2,$ which is a constant scalar multiple of the angular kinetic energy.

Since all closed physical systems, classical or quantum, conserves energy, Corollaries \ref{uncertainty} and \ref{conservE} should have some universal implications. The observations above offer the following insights:

{\bf 1}. Corollary \ref{uncertainty} is manifested in the classical harmonic oscillator example as the conservation of angular kinetic energy. Therefore, it can be viewed as a generalization of the latter. Will we obtain more conservation laws by using the corollary repeatedly?

{\bf 2}. From the perspective of the 1-dimensional observers $\{\e_j\}, j=1, 2$ above, the evolution of $\x_M$ is circular and has angular momentum, which is not the case from the perspective of the observers $\{\e_\ell\}$ and $\{\e_p\}$. This highlights the fact that different observers can have different perceptions about the same physical process.

{\bf 3}. A classical energy-conserving system also has eigenstates, and they provide an efficient alternative basis for the description of physical phenomena. Therefore, all physical aspects of the system can be expressed deterministically as 
superpositions of the eigenstates, with the coefficients $x_j(t)$ describing the energy of the system from the perspective of the observers $\{\e_j\}$. Since Corollary \ref{conservE} applies to all energy-conserving systems, there is no reason to believe that a quantum system should behave differently. This confirms Einstein's belief that ``God does not play dice". However, when a large number of identical particles $\x^n(t), n=1, 2, ...$ are measured by an observer $M$, the measurement somewhat randomly sets the initial conditions $\x^n(0)$, leading to the probabilistic nature of quantum superposition described by the Copenhagen interpretation \cite{Fa19}. 

In summary, although the discussion in this section is preliminary, there are many avenues to explore along these lines. For instance, it would be intriguing to determine the eigenstates of other classical energy-conserving systems and explore their practical applications. It is also tempting to analyze other experiments, such as the double-slit experiment \cite{Fe85} and the Casimir phenomenon \cite{La05}, using this model. Indeed, there is potential to expand this section into a comprehensive physics theory that unifies classical and quantum mechanics while transcending human-centric perceptions of nature.

\section{The Postulate of Qualia Force}

``{\em It must be confessed, moreover, that perception, and that which depends on it, are inexplicable by mechanical causes, that is, by figures and motions. And, supposing that there were a mechanism so constructed as to think, feel and have perception, we might enter it as into a mill. And this granted, we should only find on visiting it, pieces which push one against another, but never anything by which to explain a perception. This must be sought, therefore, in the simple substance, and not in the composite or in the machine."}

 \hspace{9cm} Gottfried Leibniz, Monadology, Sect. 17

\vspace{2mm}

\noindent The ``Hard Problem" of consciousness refers to the long-standing challenge of explaining why and how subjective experiences arise from physical processes in the brain. While neuroscience can address many ``easy problems" of consciousness-such as understanding the physical aspects of perception, attention, and behavior-the hard problem focuses on the existence of qualia and the nature of subjective awareness itself. This problem is at the core of many philosophical debates. The thought experiments mentioned in Sect. 2 illustrates the difficulty of reducing consciousness to physical states. This section aims to further analyze qualia and contribute to this discussion.

\subsection{The Principle of Causal Closure}
The principle in the title states that all physical effects must have sufficient physical causes, meaning that everything that happens in the physical world can only be caused by other physical things within that world, excluding any non-physical influences. In other words, the physical realm is a ``closed system" where only physical events can cause other physical events. All observed phenomena in physics-whether macroscopic or quantum-are consistent with the function of the four fundamental forces, leaving no room for non-physical influences to intervene in a measurable way. Furthermore, from a logical point of view, if a process is entirely defined and constrained by physical principles (e.g. properties of the four fundamental forces), then any outcome that deviates from these principles would contradict the definition of a physical process. Therefore, in light of Theorem \ref{nonphy}, the Principle of Causal Closure provides sufficient justification for us to pose the following postulate:

\vspace{2mm}

\noindent {\bf The Postulate of Qualia Force}. There is a natural ``force", previously unidentified and unrecognized in science, that underpins the creation of qualia.
\vspace{2mm}

\noindent This force, called the {\em qualia force} in this paper, should not be understood as something similar to the four fundamental forces, since it has rather distinct property: it can create qualia. Theorem \ref{nonphy} implies that acknowledging qualia force does not contradict physics, which is grounded on the four fundamental forces. On the other hand, physics also cannot prove the existence of qualia force. This situation is very much similar to the independence of the Parallel Postulate (or alternative postulates for elliptic and resp. parabolic geometry) from the first four postulates in Euclidean geometry.
Therefore, if we adopt the four fundamental forces as natural postulates, then adding the Postulate of Qualia Force will give rise to a logically consistent new theory. Such a theory helps integrate qualia into the causal framework of the physical world without violating the Principle of Causal Closure, and it would offer several significant theoretical advantages, both philosophically and scientifically:

{\bf 1}. {\em Bridging the Explanatory Gap}. Qualia, as subjective phenomena, does influence physical actions (e.g., moving your hand due to pain). Introducing the qualia force maintains causal closure by providing a naturalistic explanation for these interactions, ensuring that qualia are not treated as violations of physical laws.

{\bf 2}. {\em Extending the Framework of Physics}. Physics has historically expanded its scope by incorporating new forces to explain previously unexplained phenomena (e.g., gravity for celestial motion, electromagnetism for light and electricity). The qualia force could similarly extend physical theories to account for the nature of consciousness, incorporating subjective experience into the broader physical framework.

{\bf 3}. {\em Resolving the Mind-Body Problem}. This could lead to the unification of mental and physical phenomena, enhancing our understanding of reality. Recognizing the qualia force could offer a dual-aspect {\em monism}, where qualia and physical states are two facets of the same underlying process governed by natural laws. This avoids the pitfalls of {\em dualism} (which struggles with interaction) and {\em reductionism} (which denies the irreducibility of qualia), reframing debates in metaphysics, philosophy, and epistemology by grounding subjective experience in natural science. 

{\bf 4}. {\em Driving Scientific Inquiry}. Acknowledging a qualia force as a natural phenomenon would encourage empirical research into the relationship between qualia and the physical processes, known as the {\em linking theory} \cite{Te84,TP}. It opens new avenues in life science, neuroscience, psychology, and physics to investigate natural phenomena.

{\bf 5}. {\em Solving the ``Hard Problem" of Consciousness}. The concept of qualia force could offer a principled explanation for the emergence of consciousness, showing how certain physical systems generate subjective experiences.

{\bf 6}. {\em Resolving the mystery in teleonomy}. Biological systems, such as genes, cells, ant colonies, and human body, exhibit an apparent but enigmatic coordination of goal-oriented behaviors \cite{CK23,Go14,MS02}. Given that qualia force interacts uniquely with living systems, it could be considered fundamental to how these systems perceive and engage with their environments, thereby enhancing their adaptability and purpose-driven actions.

{\bf 7}. {\em Providing Insight on the Origin of Life}. Most importantly, since sensation is a fundamental characteristic of life, the recognition of a qualia force offers a promising mechanism to bridge gaps in our understanding of life's origins, evolution, and purpose.  It can provide a framework to study how life emerged and adapted through natural laws, offering a deeper and more holistic perspective on the nature of life itself.

\subsection{Absolute Reality}

Since qualia are dependent on the biological structure of the sensory faculties, as demonstrated in earlier sections, the phenomenal worlds that we take as reality is undoubtedly relative (Corollary 4.3). The notion of {\em absolute reality} refers to the existence of entities and forces that are independent of subjective perception or measurement. It is what ``exists" regardless of whether it is observed, measured, or conceptualized. The four fundamental forces are widely considered part of absolute reality because they govern interactions in the universe consistently, irrespective of observation. The following is thus a logical consequence of the Postulate of Qualia Force.

\begin{thm}\label{abr}
 The qualia force constitutes a part of absolute reality.
 \end{thm}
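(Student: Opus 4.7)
The plan is to deduce the theorem directly from the Postulate of Qualia Force together with the working definition of absolute reality given just before the statement, mirroring the style of argument used in Theorem \ref{thought}. The overall strategy is to verify the defining criterion of absolute reality, namely observer-independence, for the qualia force, and to argue that any attempt to make it observer-dependent leads to a contradiction with the Principle of Causal Closure discussed in Section 7.1.

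First I would invoke the Postulate of Qualia Force to assert the existence of a natural force $F_q$ whose role is to underpin the creation of qualia. Next I would separate two logically distinct things that should not be conflated: the qualia themselves, which by Corollary \ref{rr} are relative to the observer and to time, and the underlying mechanism $F_q$ that produces them. The fact that the outputs of a force depend on the observer (e.g., the specific sound heard by a cat versus a human) does not entail that the force itself is observer-dependent; by analogy, the electromagnetic force produces different effects on different charged configurations, yet no one takes this to mean the force exists only relative to a particular configuration.

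Then I would argue the observer-independence of $F_q$ by contradiction. Suppose $F_q$ were part of relative reality only, i.e., its existence depended on some observer $M$. Then for observers $M'$ for whom $F_q$ does not exist, no qualia could be generated; but qualia are generated across all sentient observers whenever the appropriate physical stimuli interact with their neural systems, as catalogued in Sections 2 and 3. Hence $F_q$ must act uniformly and independently of the particular $M$. Combining this with the Principle of Causal Closure, which forbids physical effects from non-physical causes without a corresponding natural force, the existence and operation of $F_q$ cannot be contingent on any single observer's perception; it must belong to the same ontological category as the four fundamental forces, which the paper has already accepted as part of absolute reality. This places $F_q$ in absolute reality by definition, completing the proof.

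The main obstacle, as I see it, is conceptual rather than technical: one must carefully justify why observer-independence of $F_q$ follows from the universal character of qualia generation, without illicitly sliding from the relativity of qualia (established in Corollary \ref{rr}) to a relativity of the force that produces them. I would address this by emphasizing the parallel with the four fundamental forces, whose effects are configuration-dependent yet whose existence is absolute, and by appealing explicitly to the Postulate of Qualia Force, which asserts $F_q$ as a \emph{natural} force rather than an artifact of any particular mind. Once this separation is made clean, the remainder of the argument is essentially a matching of $F_q$ against the stated criterion for absolute reality.
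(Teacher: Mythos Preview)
Your proposal is correct and follows essentially the same route as the paper: the paper simply states that the theorem ``is thus a logical consequence of the Postulate of Qualia Force,'' relying on the definition of absolute reality (observer-independence) and the parallel with the four fundamental forces. Your argument is more elaborate than the paper's---the contradiction step and the explicit separation of qualia from the qualia force are handled in the paper only informally, in the illustrative characteristics and Q\&A that follow the theorem rather than in a proof proper---but the core logic is identical.
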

 
The following three characteristics of qualia force help illuminate Theorem \ref{abr}.

{\bf 1}. {\em Universality Across All Life Forms}. There are a great variety of life forms on the Earth, from minuscule single cells to large mammals, and each of them relies on qualia force to perceive and adapt to the environment. This universality indicates a fundamental aspect of existence, akin to the other four fundamental forces. 

{\bf 2}. {\em Objectivity of Function}. The qualia force operates objectively, producing sensations that reflect external processes without subjective bias in the mechanism itself. For example, sight reflects everything within its range, like a mirror, irrespective of aesthetic or emotional judgments about the observed objects. Sensory systems, guided by the qualia force, function like instruments, without imposing subjective preferences.

{\bf 3}. {\em Timelessness}. If an Earth-like planet were to exist billion years ago or in the future, then similar life forms will evolve under comparable conditions. The timeless operation of the qualia force suggests that life forms are not limited to a specific era or location in the cosmos.  In regions where no sentient beings currently exist, the force remains latent, akin to a magnetic field in a vacuum that manifests only in the presence of charged particles. This property aligns with other fundamental forces, which are considered invariant across time.

Some questions and answers will help clarify the three points above.

\vspace{2mm}

{\bf Q}. {\em Don't qualia depend on complex biological or neural systems, which are not present everywhere?}

{\bf A}. While the manifestation of qualia requires specific systems, the force itself can exist universally, waiting for suitable conditions to act upon, just as gravity exists everywhere but its effects depend on the presence of mass.

{\bf Q}. {\em Why is qualia force objective, since after all qualia are subjective experiences unique to individuals?}

{\bf A}. The subjectivity of qualia refers to their manifestation, not the functioning of the underlying force. The qualia force objectively generates experiences based on input, with subjectivity arising from the interaction with specific cognitive architectures.

{\bf Q}. {\em Since qualia are tied to temporal processes like perception and cognition, why is the force timeless?}

{\bf A}. Although qualia manifest within time-dependent living organisms, the underlying force responsible for their creation is not dependent on these organisms. This can be compared to electromagnetism, which exists timelessly as a fundamental force, while specific electromagnetic waves oscillate with time.

{\bf Q}. In the mathematical framework detailed in Sections 2.3 and 4.2, what represents absolute reality and what represents qualia force?

{\bf A}. The vectors in Hilbert space $\HH$ and the physical state functions $\x(t)$ represent absolute reality, whereas the inner product reflects the action of qualia force. For any observer $M\subset \HH$, the range of the perception operator $\per_M$ embodies the phenomenal aspect of absolute reality relative to $M$.

\vspace{2mm}

\subsection{Summary} The concept of a universal nature underlying life and worldly existence has been proposed in various forms throughout human history. Here, we highlight a few examples. Hindu philosophy Vedanta holds that {\em Brahman} is the ultimate, infinite, and eternal reality underlying all existence. It is formless, changeless, and the source of all that exists \cite{Sa}. Mahayana Buddhism theorizes the concept of a universal essence (known as the {\em Buddha nature}), which is neither rising nor ending, that underpins the existence and functioning of all sentient beings \cite{Su}. Plato's theory of {\em forms} posits that absolute reality consists of perfect, immutable archetypes (forms) of which the physical world is merely an imperfect reflection \cite{Pl}. Spinoza \cite{Sp} proposed a monistic view, asserting that a single substance underlies all existence, integrating both thought (mental reality) and extension (physical reality). Kant introduced the concept of {\em noumenon}, which exists independently of human perception and is distinct from the phenomena that we experience \cite{Ka}. More recent versions of such belief include Russel's neutral monist view \cite{Ru27} that there is a basic neutral (neither phenomenal nor physical) substance underlying both the phenomenal and the physical, and Bohm's {\em implicate order} which represents a deeper, underlying reality from which the observable universe (the {\em explicate order}) emerges \cite{Bo80}. In particular, Chalmers \cite{Ch95} made the following insightful comment: ``Consciousness might be taken as a fundamental feature of the world, alongside mass, charge, and space-time. We might take experience as fundamental, and work from there to see what theory can be constructed."

In summary, forces like gravity and the qualia force represent the noumenal aspects of absolute reality, with their manifestations (phenomena) perceived and interpreted uniquely by different observers. While qualia depends on particular physical systems (e.g., neurons, nerves, and brains), the force itself must operate according to objective laws, constituting a part of absolute reality. It complements the four forces by bridging the gap between mental and physical realms. All phenomena, whether physical or mental, must have causes grounded in the nature of absolute reality. The four forces and the qualia force provide the causal framework for both physical and mental phenomena. This approach avoids dualistic or supernatural explanations for consciousness, rooting it instead in the natural laws of absolute reality. From this perspective, the framework developed in this paper offers a promising logical foundation for developing a comprehensive worldview of ourselves and the physical world.

\section{An Excursion to Chan Buddhism}

Buddha nature, mentioned in Sect. 7.3, represents the intrinsic potential for enlightenment inherent in all living beings. Moments of enlightenment are often triggered by sudden sensory stimuli. Indeed, awakening to the presence of qualia is often regarded a hallmark of enlightenment \cite{Pu,Su91}. In his celebrated book {\em G\"{o}del, Escher, Bach: An Eternal Golden Braid}, Douglas Hofstadter made frequent references to Chan (Zen) stories and Gong'ans (Koans). To conclude this discussion, we follow his lead and draw on two Gong'ans to bring traditional wisdom to light. 

\vspace{2mm}

{\bf 1}. {\em Baizhang's Nose}. One day, Chan Master Mazu (709-788) and his student Baizhang (720-814) were strolling outside of the monastery. Some wigeons flew overhead. Mazu asked, ``What are those birds?" Baizhang replied, ``They are wigeons". As they continued walking, Mazu inquired again a moment later, ``Where are they now?" Baizhang responded, ``They flew away". Mazu suddenly pinched Baizhang's nose hard, and Baizhang cried out in pain, ``Pain! let go of my nose!" Mazu shouted, ``Didn't you just say they flew away?" Baizhang was awakened at that instant.

\vspace{1mm}

{\bf 2}. {\em The Sound of Bamboo}. When Baizhang became a Chan master, Zhixian (?- 898) came to study under him. Zhixian was naturally intelligent and well-versed in Buddhist doctrines. Whenever a question was posed to him, he could eloquently present 10 answers. After Baizhang passed away, Zhixian went to study with his senior Dharma brother, Weishan (771-853). One day, in an attempt to break Zhixian's cling to doctrines, Weishan questioned him, ``What was your original face before you were born?" Zhixian was completely baffled by this question. Finding no answer in the scriptures, he lamented, ``A painted rice cake cannot satisfy hunger." In despair, he burned all the scriptures and bid farewell to Weishan. Years later, one day while working in a field clearing weeds, he absentmindedly threw a piece of rock, which struck a bamboo stalk, producing a clear, resonant sound. At that moment, he suddenly attained enlightenment.

\vspace{2mm}

In the Chan tradition, all mental constructs, including concepts such as ``Buddha nature" and ``qualia force", dissolve at the moment of enlightenment.


\begin{thebibliography}{99}

\bibitem{Ba86} Baddeley, Alan D. (1986). ``Working Memory." Clarendon Press/Oxford University Press.

\bibitem{Ba88} Baars, Bernard J. (1988). ``A Cognitive Theory of Consciousness." Cambridge University Press.

\bibitem{Ba97} Baars, Bernard J. (1997). ``In the Theater of Consciousness." Oxford University Press.

\bibitem{Bl80} Block, Ned. (1980). ``Troubles with Functionalism." {\em Readings in Philosophy of Psychology}, Vol. 1, Harvard University Press.

\bibitem{Bl95} Block, Ned. (1995). ``On a confusion about a function of consciousness." {\em Behavioral and Brain Sciences} 18 (2), 227-287.


\bibitem{BM11} Bueti, Domenica; and Macaluso,  Emiliano. (2011). ``Physiological Correlates of Subjective Time: Evidence for the temporal accumulator hypothesis." {\em NeuroImage} 1, Aug. 2011.

\bibitem{Bo80} Bohm, David. (1980). ``Wholeness and the Implicate Order." Routledge.

\bibitem{Br58} Broadbent, Donald E. (1958). ``Perception and Communication."  Pergamon Press.

\bibitem{Br74} Brentano, Franz. (1874). ``Psychology from an Empirical Standpoint." Duncker \& Humblot, Leipzig.

\bibitem{Ch95} Chalmers, David J. (1995). ``Facing Up to the Problem of Consciousness." {\em Journal of Consciousness Studies} 2 (3), 200-219.

\bibitem{Ch96} Chalmers, David J. (1996). ``The Conscious Mind: In Search of a Fundamental Theory." New York/Oxford: Oxford University Press.


\bibitem{CK23} Corning, Peter A.; Kauffman, Stuart A.; Noble, Denis; Shapiro, James A.; Vane-Wright, Dick; and Pross, Addy. (2023). ``Evolution `On Purpose': Teleonomy in Living Systems." {\em Vienna Series in Theoretical Biology}, The MIT Press.

\bibitem{Con} Conway, John B. (1990). ``A Course in Functional Analysis (2nd ed.)." {\em Graduate Text in Mathematics} 96, Springer.


\bibitem{De85} Dennett, Daniel. (1985). ``Quining Qualia." {\em Consciousness in Contemporary Science}, 42-77. New York: Clarendon Press/Oxford University Press.

\bibitem{Do98} Douglas, Ronald G. (1998). ``Banach Algebra Techniques in Operator Theory (2nd. ed.)." Springer-Verlag, New York.

\bibitem{Ea08} Eagleman, David M. (2008). ``Human Time Perception and Its Illusions." {\em Current Opinion in Neurobiology}18 (2), 131-136.


\bibitem{Fe60} Fechner, Gustav T. (1860/1912). ``Elements of Psychophysics." Sections VII and XVI. Translated by Herbert Sidney Langfeld.

\bibitem{Fe85} Feynman, Richard P. (1985). ``QED: The Strange Theory of Light and Matter." Princeton University Press.

\bibitem{FG22} Faris, Joshua; and G\"{o}cke, Benedikt P. (2022). ``The Routledge Handbook of Idealism and Immaterialism." Routledge Handbooks, Routledge.

\bibitem{Fa19} Faye, Jan (2019). ``Copenhagen Interpretation of Quantum Mechanics." In Zalta, Edward N. (ed.). {\em Stanford Encyclopedia of Philosophy.} Metaphysics Research Lab, Stanford University.

\bibitem{Fo75} Fodor, Jerry A. (1975) ``The Language of Thought." Harvard University Press.

\bibitem{Fr00} Freud, Sigmund. (1900). ``The Interpretation of Dreams." Franz Deuticke, Leipzig \& Vienna.


\bibitem{Go99} Goldstein, E. Bruce. (2016). ``Sensation and Perception." Cengage Learning; 10th edition.

\bibitem{GW63} Gregory, Richard L; and Wallace, Jean G. (1963). ``Recovery from Early Blindness A Case Study." {\em Experimental Psychology Society Monograph}, No. 2.




\bibitem{Ga18} Gazzaniga, Michael S. (2018). ``The Consciousness Instinct: Unraveling the Mystery of How the Brain Makes the Mind." Farrar, Straus and Giroux.


\bibitem{GIM18} Gazzaniga, Michael S.; Ivry, Richard B.; and Mangun, George R. (2018). ``Cognitive Neuroscience: The Biology of the Mind." W. W. Norton \& Company; Fifth edition.


\bibitem{Go14} Gordon, Deborah M. (2023). ``The Ecology of Collective Behavior." Princeton University Press.

\bibitem{Gr07} Greenberg, Marvin J. (2007). ``Euclidean and Non-Euclidean Geometries: Development and History." New York: W. H. Freeman; Fourth edition.


\bibitem{Ha57} Halmos, Paul. (1957). ``Introduction to Hilbert Space and the Theory of Spectral Multiplicity." Chelsea Pub. Co.

\bibitem{Ja82} Jackson, Frank. (1982). ``Epiphenomenal Qualia." {\em Philosophical Quarterly} 32, 127-36.

\bibitem{Ka} Kant, Immanuel. (1781/2008). ``Critique of Pure Reason." Penguin Classics; Revised edition.

\bibitem{Ka11} Kahneman, Daniel. (2011). ``Thinking, Fast and Slow." Farrar, Straus and Giroux; First Edition.

\bibitem{La06} Lamme, Victor A. (2006). ``Towards a True Neural Stance on Consciousness." {\em Trends in Cognitive Sciences} 10 (11), 494-501.

\bibitem{Le00} Levine, Michael W. (2006). ``Fundamentals of Sensation and Perception." Oxford University Press; 3rd edition.



\bibitem{La05} Lamoreaux, Steven K. (2005). ``The Casimir Force: Background, Experiments, and Applications." {\em Reports on Progress in Physics} 68 (1), 201-236.

\bibitem{Mi56} Miller, George A. (1956). ``The Magical Number Seven, Plus or Minus Two: Some Limits on Our Capacity for Processing Information." {\em Psychological Review} 63 (2), 81-97.

\bibitem{MS02} Milo, R., Shen-Orr, S., Itzkovitz, S., Kashtan, N., Chklovskii, D., and Alon, U. (2002). ``Network Motifs: Simple Building Blocks of Complex Networks." {\em Science}, Vol. 298, Issue 5594, 824-827. 

\bibitem{Mu} M\"{u}ller, Joh. (1835). ``Handbuch der Physiologie des Menschen f\"{u}r Vorlesungen (2nd ed.)." Translated by Edwin Clarke and Charles Donald O'Malley.

\bibitem{Na74} Nagel, Thomas. (1974). ``What Is It Like to Be a Bat?" {\em The Philosophical Review} 83 (4), 435-450.

\bibitem{Pa99} Palmer, Stephen. (1999). ``Vision Science: Photons to Phenomenology." The MIT Press.

\bibitem{Pl} Plato. (375 BC). ``The Republic" (Books 6 and 7).

\bibitem{PR98} Posner, Michael I.; and Rothbart, Mary K. (1998). ``Attention, self-regulation, and consciousness." {\em Philosophical Transactions of the Royal Society B: Biological Sciences} 353 (1377), 1915-1927.

\bibitem{Pu} Shi, Puji. (compiler, 1195-1253). ``A Compendium of the Five Lamps" (Chinese).

\bibitem{Ru27} Russell, Bernard. (1927). ``The Analysis of Matter." London: Kegan Paul.


\bibitem{Sh82} Shoemaker, Sydney. (1982). "The Inverted Spectrum." {\em The Journal of Philosophy} 79 (7), 357-381.

\bibitem{Sh94} Shankar, Ramamurti. (1994). ``Principles of Quantum Mechanics (2nd ed.)." Springer.

\bibitem{Sa} ``Vedanta Sutras, with the Commentary by Sankaracharya." G. Thibaut, trans.

\bibitem{Sp} Spinoza, Baruch. (1677). ``Ethics." E. Curley, trans. New York:  Penguin, 2005.


\bibitem{SS77}. Schneider, Walter; and Shiffrin, Richard M. (1977). ``Controlled and Automatic Human Information Processing: I. Detection, Search, and Attention." {\em Psychological Review} 84 (1), 1-66.

\bibitem{St57} Stevens, Stanley S. (1957). ``On the Psychophysical Law." {\em Psychological Review} 64 (3), 153-181.

\bibitem{SB18} Allman, Melissa J.; and Meck Warren H. (2012). ``Pathophysiological Distortions in Time Perception and Timed Performance." {\em Brain} 135, 656-677.

\bibitem{SB22} Wittmann, Marc; Carter, Olivia; Hasler, Felix; Cahn, B. Rael; Grimberg, Ulrike; Spring, Philipp; Hell, Daniel; Flohr, Hans; and Vollenweider, Franz. (2007). ``Effects of Psilocybin on Time Perception and Temporal Control of Behavior in Humans." {\em J. Psychopharmacol} 21 (1), 50-64.

\bibitem{Su} (2009). ``The Surangama Sutra" (with commentary by Master Hsuan Hua). Buddhist Translation Society.

\bibitem{Su91} Suzuki, Daisetz T. (1991). ``An Introduction to Zen Buddhism." Grove Press.


\bibitem{Th07} Thompson, Evan. (2010). ``Mind in Life: Biology, Phenomenology, and the Sciences of Mind." Harvard University Press.

\bibitem{Th27} Thomson, George P.; and Reid, Alexander. (1927). ``Diffraction of Cathode Rays by a Thin Film." {\em Nature} 119, 890-890.


\bibitem{Te84} Teller, Davida Y. (1984). ``Linking Propositions." {\em Vision Research} 24 (10),1233-1246.

\bibitem{TP} Teller, Davida Y.; and Palmer, John. (2016). ``Vision and the Visual System." (No publication information is found.)


\bibitem{To04} Tononi, Giulio. (2004). ``An Information Integration Theory of Consciousness." {\em BMC Neuroscience} 5 (1), 42.



\end{thebibliography}
\end{document}